\newtheorem{theorem}{Theorem}
\newtheorem{proposition}[theorem]{Proposition}
\begin{document}
%
\title{Antenna Grouping based Feedback Compression for FDD-based Massive MIMO Systems}


\author{
Byungju Lee,~\IEEEmembership{Member,~IEEE,}
Junil Choi,~\IEEEmembership{Member,~IEEE,}
Ji-Yun Seol,~\IEEEmembership{Member,~IEEE,}\\
David J. Love,~\IEEEmembership{Fellow,~IEEE,}
and Byonghyo Shim,~\IEEEmembership{Senior Member,~IEEE,}
\thanks{B. Lee and B. Shim are with Institute of New Media and Communications and School of Electrical and Computer Engineering, Seoul National University, Seoul, Korea (e-mail:bjlee$@$islab.snu.ac.kr,bshim$@$snu.ac.kr).

J. Choi is with the University of Texas at Austin, Austin, TX, USA (e-mail:junil.choi@utexaus.edu).

D. J. Love is with School of Electrical and Computer Engineering, Purdue Univ., West Lafayette, IN, USA (email:djlove$@$purdue.edu).

J. Seol is with Samsung Electronics Co., Ltd., Suwon, Korea (email:jiyun.seol$@$samsung.com).}
\thanks{This work was supported by the National Research Foundation of Korea (NRF) grant funded by the Korean government (MSIP) (No. 2014R1A5A1011478 and 2014049151) and MSIP (Ministry of Science, ICT $\&$ Future Planning), Korea in the ICT R$\&$D Program 2013 (No. 1291101110-130010100).}
\thanks{This paper was presented in part at the International Conference on Communications (ICC), 2014 \cite{ByungjuICC} and Vehicular Technology Conference (VTC), 2014 \cite{ByungjuVTC}.}
}

\markboth{To appear in IEEE Transactions on Communications}
{Shell \MakeLowercase{\textit{et al.}}: Bare Demo of IEEEtran.cls for Journals}
%



\maketitle
\begin{abstract}
Recent works on massive multiple-input multiple-output (MIMO) have shown
that a potential breakthrough in capacity gains can be achieved by deploying a very large number of antennas at the basestation.
In order to achieve the performance that massive MIMO systems promise, accurate transmit-side channel state information (CSI) should be available at the basestation.
While transmit-side CSI can be obtained by employing channel reciprocity in time division duplexing (TDD) systems, explicit feedback of CSI from the user terminal to the basestation is needed for frequency division duplexing (FDD) systems.
In this paper, we propose an antenna grouping based feedback reduction technique for FDD-based massive MIMO systems.
The proposed algorithm, dubbed antenna group beamforming (AGB), maps multiple correlated antenna elements to a single representative value using pre-designed patterns.
The proposed method modifies the feedback packet by introducing the concept of a \textit{header} to select a suitable group pattern and a \textit{payload} to quantize the reduced dimension channel vector.
Simulation results show that the proposed method achieves significant feedback overhead reduction over conventional approach performing the vector quantization of whole channel vector under the same target sum rate requirement.
\end{abstract}

\begin{IEEEkeywords}
Massive multiple-input multiple-output, antenna group beamforming, feedback reduction, vector quantization, Grassmannian subspace packing.
\end{IEEEkeywords}



%
\IEEEpeerreviewmaketitle

\section{Introduction}
\label{sec:AGB_secI}

Multiple-input multiple-output (MIMO) systems with large-scale transmit antenna arrays, often called massive MIMO, have been of great interest in recent years because of their potential to dramatically improve spectral efficiency of future wireless systems \cite{Marzetta2010,Hoydis2013}.
By employing a simple linear precoding in the downlink and receive filtering in the uplink, massive MIMO systems can control intra-cell interference and thermal noise \cite{Marzetta2010}.
Additionally, massive MIMO can improve the power efficiency by scaling down the transmit power of each terminal inversely proportional to the number of basestation antennas for uplink \cite{Hoydis2013}.

Presently, standardization activity for massive MIMO has been initiated \cite{Standard, Nam2013}, and there is on-going debate regarding the pros and cons of time division duplexing (TDD) and frequency division duplexing (FDD).
In obtaining the channel state information (CSI), FDD requires the CSI to be fed back through the uplink \cite{Lovefeedback} while
no such procedure is required for TDD systems owing to the channel reciprocity \cite{Jose}.
In fact, under the assumption that RF chains are properly calibrated \cite{Emil2013}, the CSI of the downlink can be estimated using the pilot signal in the uplink.
Due to this benefit, most of the massive MIMO works in the literature have focused on TDD \cite{Shepard} (possible exceptions are \cite{Junil2013,NamJSDM,Kuo,CKLove,Truong,Dawei,Davidtraining,Noh2,Joham}).
However, FDD dominates current cellular networks and offers many benefits over TDD (e.g., small latency, continuous channel estimation, backward compatibility), and it is important to identify and develop solutions for potential issues arising from FDD-based massive MIMO techniques.

One well-known problem of FDD system is that the amount of CSI feedback must scale linearly with the number of antennas to control the quantization error \cite{Santipach,Jindal,AuYeung,Mukkavilli}.
Therefore, it is not hard to convince oneself that the overhead of CSI feedback is a serious concern in the massive MIMO regime.
Needless to say, a technique that efficiently reduces the feedback overhead while affecting minimal impact on system performance is crucial to the success of FDD-based massive MIMO systems.

In this paper, we provide a novel framework for FDD-based massive MIMO systems that achieves a reduction in the CSI feedback overhead by exploiting the spatial correlation among antennas.
The proposed algorithm, henceforth dubbed antenna group beamforming (AGB),
maps multiple correlated antenna elements to a single representative value using properly designed grouping patterns.
When the antenna elements are correlated, the loss caused by grouping antenna elements is shown to be small, meaning that grouping of antenna elements with correlated channels is an effective means of reduced dimension channel vector generation.
In fact, by allocating a small portion of the feedback resources to represent the grouping pattern, the number of bits required for channel vector quantization can be reduced substantially, resulting in a significant reduction in feedback overhead.
In order to support the antenna grouping operation, the proposed AGB algorithm uses a new feedback packet structure that divides the feedback resources into two parts: a \textit{header} to indicate the antenna group pattern and a \textit{payload} to indicate the codebook index of the reduced dimension channel vector.
At the user terminal, a pair of group pattern and codeword minimizing the quantization distortion is chosen.
Using the information delivered from the user terminal, the basestation reconstructs the full-dimensional channel vector and then performs transmit beamforming.

In our analysis, we show that when the transmit antenna elements are correlated, the proposed AGB algorithm exhibits smaller quantization distortion than the conventional vector quantization employing a channel statistic-based codebook.
This in turn implies that the number of quantization bits required to meet a certain level of the performance for the AGB algorithm is smaller than that of conventional vector quantization under the same level of quantization distortion.
We also investigate an estimated required number of feedback bits to maintain a constant gap with respect to the system with perfect CSI.
It is shown that the use of antenna grouping in correlated channels enables to considerably reduce the amount of feedback overhead.
Moreover, due to the fact that dimension of the codeword being searched is reduced, and hence the proposed AGB brings additional benefits in search complexity over the conventional vector quantization.
We confirm by simulation on realistic massive MIMO channels that the proposed AGB algorithm achieves up to $20\%$$\sim$$70\%$ savings in feedback information over the conventional vector quantization under the same target sum rate requirement.

The remainder of this paper is organized as follows.
In Section \ref{sec:AGB_secII},  we briefly review the system model and the conventional beamforming technique.
In Section \ref{sec:AGB_secIII}, we provide a detailed description of the proposed AGB algorithm and subspace packing based grouping pattern generation scheme.
We present the simulation results in Section IV and present conclusions in Section V.

\textbf{Notations}: Lower and upper boldface symbols are used to denote vectors and matrices, respectively.
The superscripts $(\cdot)^{H}$, $(\cdot)^{T}$, and $(\cdot)^{\ast}$ denote Hermitian transpose, transpose, and conjugate, respectively.
$\| \mathbf{X}\|$ and $\| \mathbf{X} \|_{F}$ are used as the two-norm and the Frobenius norm of a matrix $\mathbf{X}$, respectively. $E[\cdot]$ denotes the expectation operation, and $\mathcal{CN}(m,\sigma^{2})$ indicates a complex Gaussian distribution with mean $m$ and variance $\sigma^{2}$. $\textrm{tr}(\cdot)$ is the trace operation, and $\textrm{vec}(\mathbf{X}) $ is the vectorization of matrix $\mathbf{X}$. Let $\mathbf{X}_{\Lambda} \in \mathbb{C}^{|\Lambda| \times |\Lambda|}$ denote a submatrix of $\mathbf{X}$ whose $(i,j)$-th entry is $\mathbf{X}\left(\Lambda(i),\Lambda(j)\right)$ for $i,j = 1,\ldots,|\Lambda|$ ($\Lambda$ is the set of partial indices and $|\Lambda|$ is the cardinality of $\Lambda$).

\section{MIMO Beamforming}
\label{sec:AGB_secII}

\subsection{System Model and Conventional Beamforming}
\label{subsec:AGB_secII_subI}

We consider a multiuser multiple-input single-output (MISO) downlink channel with $N_{t}$ antennas at the basestation and $K$ user terminals each with a single antenna\footnote{Note that the proposed method can be easily extended to a MIMO scenario by vectorizing the channel vector corresponding to each receive antenna. For simplicity, we consider the MISO setup for the rest of this paper.} (see Fig. \ref{fig:AGB_Fig1}).
We assume spatially correlated \textit{and} temporally correlated block-fading channels where the channel vector $\mathbf{h}_{i,\ell}$ follows the first-order Gauss-Markov model as
\begin{equation}
\label{eq:start2}
\mathbf{h}_{k,0} = \mathbf{R}_{t,k}^{1/2} \mathbf{g}_{k,0} \nonumber
\end{equation}
\begin{equation}
\label{eq:start3}
\mathbf{h}_{k,\ell} = \eta \mathbf{h}_{k,\ell-1} + \sqrt{1-\eta^{2}} \mathbf{R}_{t,k}^{1/2} \mathbf{g}_{k,\ell}, \,\,\,\,\, \ell \geq 1 \nonumber
\end{equation}
where $\mathbf{R}_{t,k} \in \mathbb{C}^{N_{t} \times N_{t}}$ is the transmit correlation matrix of the $k$-th user \cite{Clerckx}
and $\mathbf{g}_{k,\ell} \in \mathbb{C}^{N_{t}}$ is the innovation process whose elements are independent and identically distributed according to $\mathbf{g}_{k,\ell} \sim \mathcal{CN}(\mathbf{0},\mathbf{I}_{N_{t}})$ and $\eta$ is a temporal correlation coefficient ($0 \leq \eta \leq 1$).
We assume the block-fading channel has a coherence time of $L$, which means that the channel is static for $L$ channel uses in each block and changes from block-to-block.
In this setup, the received signal of the $k$-th user for the $n$-th channel use in the $\ell$-th fading block can be expressed as
\begin{equation}
\label{eq:start}
y_{k,\ell}[n] = \mathbf{h}_{k,\ell}^{H} \mathbf{w}_{k,\ell} s_{k,\ell}[n] + \mathbf{h}_{k,\ell}^{H} \sum_{j \neq k} \mathbf{w}_{j,\ell} s_{j,\ell}[n] + z_{k,\ell}[n]
\end{equation}
where $\mathbf{h}_{k,\ell} \in \mathbb{C}^{N_{t}}$ is the channel vector from the basestation antenna array to the $k$-th user, $\mathbf{w}_{i,\ell} \in \mathbb{C}^{N_{t}}$ is the unit norm beamforming vector ($\|\mathbf{w}_{i,\ell}\|^{2}=1$), $s_{i,\ell}[n] \in \mathbb{C}$ is the message signal for the $i$-th user, and $z_{k,\ell}[n]\sim \mathcal{CN}(0,1)$ is normalized additive white Gaussian noise at the $k$-th user.
%
\begin{figure}[]
\begin{center}
	\includegraphics[width=85mm]{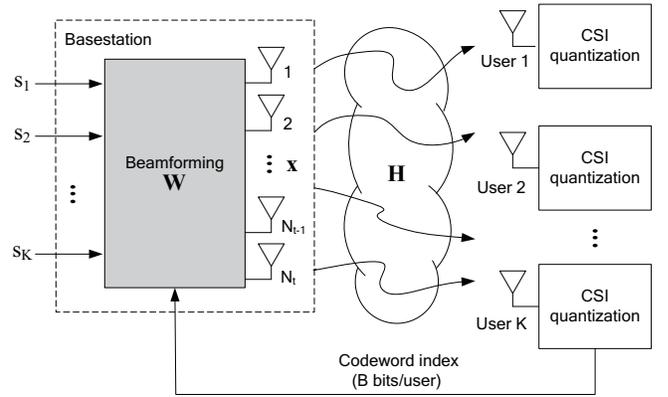}
\caption{CSI feedback in the multi-user downlink system.}
\label{fig:AGB_Fig1}
\end{center}
\vspace{-0.2cm}
\end{figure}
Since the beamforming is performed separately per block, in the sequel we focus on the operation of a single block and drop the fading block index $\ell$.
The matrix-vector form of (\ref{eq:start}) is expressed as
\begin{equation}
\label{eq:total_received}
\mathbf{y}[n] = \mathbf{H} \mathbf{x}[n] + \mathbf{z}[n]
\end{equation}
where $\mathbf{H} = [\mathbf{h}_{1} \,\, \mathbf{h}_{2} \,\, \ldots \,\, \mathbf{h}_{K}]^{H} \in \mathbb{C}^{K \times N_{t}}$ is the composite channel matrix, $\mathbf{z}[n] = [z_{1}[n] \,\,$ $z_{2}[n] \,\, \ldots \,\,$ $z_{K}[n]]^{T} \in \mathbb{C}^{K}$ is the complex Gaussian noise vector ($\mathbf{z}[n] \sim \mathcal{CN}(\mathbf{0},\mathbf{I}_{K})$), $\mathbf{x}[n]$ is the transmit vector normalized with the power constraint ($E[\| \mathbf{x}[n] \|^{2}] = P$), and $\mathbf{y}[n] =[y_{1}[n] \,\, y_{2}[n] \,\, \ldots, \,\, y_{K}[n]]^{T}$ is the vectorized received signal vector.
In order to control the inter-user interference, beamforming is applied using $\mathbf{x}[n] = \mathbf{W} \mathbf{s}[n]$ where $\mathbf{W} = [\mathbf{w}_{1} \,\, \mathbf{w}_{2} \,\, \ldots \,\, \mathbf{w}_{K}] \in \mathbb{C}^{N_{t} \times K}$ and $\mathbf{s}[n] = [s_{1}[n] \,\, s_{2}[n] \,\, \ldots \,\, s_{K}[n]]^{T} \in \mathbb{C}^{K}$ are the beamforming matrix and the message vector, respectively.

In generating the beamforming vectors, we consider zero-forcing beamforming (ZFBF) \cite{Caire2003,Yoo,Jungyong2012,Kailbin,Niranjay} where the right pseudo inverse\footnote{The proposed AGB in this paper is aimed to generate $\hat{\mathbf{h}}_{k}$ with reduced feedback overhead and can be applied to any precoding method including ZFBF.} $\hat{\mathbf{W}}_{\textrm{zf}} = \hat{\mathbf{H}}^{H} \left( \hat{\mathbf{H}} \hat{\mathbf{H}}^{H} \right)^{-1}$ of the quantized channel matrix $\hat{\mathbf{H}} = [\hat{\mathbf{h}}_{1},\hat{\mathbf{h}}_{2},\cdots,\hat{\mathbf{h}}_{K}]^{H}$ is applied to the message vector $\mathbf{s}[n]$ to alleviate the inter-user interference.
In order to satisfy the transmit power constraint, the beamforming vector $\hat{\mathbf{w}}_{k}$ should be normalized as
\begin{equation}
\label{eq:beamforming_vector}
\hat{\mathbf{w}}_{k} = \frac{\hat{\mathbf{W}}_{\textrm{zf}}^{k}}{\| \hat{\mathbf{W}}_{\textrm{zf}}^{k} \|}
\end{equation}
where $\hat{\mathbf{W}}_{\textrm{zf}}^{k}$ is the $k$-th column of $\hat{\mathbf{W}}_{\textrm{zf}}$.
Under the assumption that the basestation allocates equal power for all users\footnote{In this paper, we consider the equal power allocation scenario for simplicity. In order to maximize the sum rate, one might consider more deliberate power allocation strategies (e.g., waterfilling after the block diagonalization \cite{Spencer}).}, the achievable rate of the $k$-th user is
\begin{equation}
\label{eq:sumrate_alluser}
 R_{k} =\log_{2} \left( 1 + \frac{ \frac{P}{K} |\mathbf{h}_{k}^{H} \hat{\mathbf{w}}_{k}|^{2}  }{ 1 + \frac{P}{K} \sum_{j=1, j \neq k}^{K} |\mathbf{h}_{k}^{H} \hat{\mathbf{w}}_{j}|^{2} } \right)
\end{equation}
and the corresponding sum rate becomes $R_{\textrm{sum}} = \displaystyle\sum_{k=1}^{K} R_{k}$.

\begin{figure*}[t]
\centerline{\psfig{figure=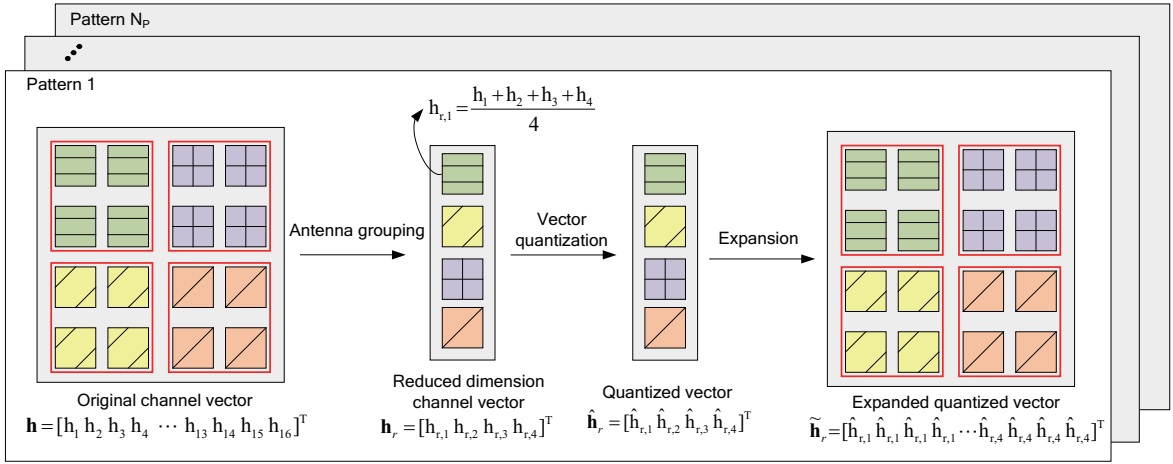, width=155mm}}
\caption{Illustration of the AGB algorithm for $N_{t}=16, N_{g}=4$. The reduced dimension channel vector $\mathbf{h}_{r}$ is obtained by mapping antenna elements of a group as a representative value. Note that $\hat{\mathbf{h}}_{r}$ is the quantized version of $\mathbf{h}_{r}$ and $\tilde{\mathbf{h}}_{r}^{(i)}$ is expanded version of $\hat{\mathbf{h}}_{r}$.} \label{fig:AGB_Fig3}
\vspace{-0.2cm}
\end{figure*}

\subsection{Conventional Limited Feedback}
\label{subsec:AGB_secII_subII}

In order to feed back the CSI, a user quantizes its channel direction $\bar{\mathbf{h}}_{k} = \frac{\mathbf{h}_{k}}{\| \mathbf{h}_{k} \|}$ to a unit norm vector $\hat{\mathbf{h}}_{k}$.
Specifically, the $k$-th user chooses the quantized vector (codeword) $\hat{\mathbf{h}}_{k}$
from a pre-defined $B$-bit codebook set $\mathcal{C}=\{ \mathbf{c}_{1}, \cdots, \mathbf{c}_{2^{B}}\}$ that is closest to its channel direction:\footnote{In practice, each user quantizes the estimated channel, which is obtained using the observations of the pilot signals. Once the channel information corresponding to the pilot signals are estimated, the channel information for the data tones are generated via proper interpolation among pilot channels. With an aim of reducing the pilot overhead of massive MIMO systems, various approaches have been proposed in recent years.
    In \cite{SunhoTsp}, an algorithm exploiting data tones for channel estimation has been proposed and also a pilot allocation strategy based on the sparsity of channel impulse response and compressive sensing (CS) principle \cite{ita_choi15}. Once the estimated channel information is obtained at the receiver, correlation matrix can be estimated using samples of instantaneous channel information (i.e., $\hat{\mathbf{R}}=E[\hat{\mathbf{h}}\hat{\mathbf{h}}^{H}]$).}
\begin{equation}
\label{eq:codebookindex}
\hat{\mathbf{h}}_{k} = \arg \max_{\mathbf{c} \in \mathcal{C} } |\bar{\mathbf{h}}_{k}^{H} \mathbf{c}|^{2}.
\end{equation}
Then, the index of the chosen codeword $\hat{\mathbf{h}}_{k}$ is fed back to the basestation.

In general, the number of bits needed to express the codeword should be scaled with the dimension of the channel vector to be quantized to control the distortion caused by the quantization process.
In particular, when there is no spatial correlation among antenna elements and the random vector quantization (RVQ) codebook is used, the number of feedback bits per user should be scaled with the number of transmit antennas and SNR (in decibels) as \cite{Jindal,Ding}
\begin{equation}
\label{eq:beamforming_vector2}
B_{\textrm{user}}= (N_{t}-1) \log_{2} P \approx \frac{N_{t}-1}{3} P_{\textrm{dB}}
\end{equation}
to maintain a constant gap in terms of the sum rate from the system with perfect CSI.
Hence, when the number of transmit antennas increases, the feedback overhead needs to be increased as well (e.g., $B_{\textrm{user}}=210$ when $N_{t}=64$, $P_{\textrm{dB}}=10$), let alone the computational burden caused by the codebook selection.
Therefore, a reduction in the number of channel vector dimensions would be beneficial in reducing the feedback overhead of the FDD-based massive MIMO systems.

\section{Antenna Grouping based Feedback Reduction}
\label{sec:AGB_secIII}

The key feature of the proposed scheme is to map multiple correlated antenna elements into a single representative value using grouping patterns.
As a result, the channel vector dimension is reduced and a codeword is chosen from the codebook generated by the reduced dimension vector.
When the channel is correlated (i.e., antenna elements in a group are similar), the loss caused by the grouping of antenna elements is negligible and the target performance can be achieved with smaller number of feedback bits than a conventional scheme requires.
In this section, we explain the overall procedure of the proposed AGB algorithm and then discuss the pattern set design problem.
We also analyze the quantization distortion caused by the proposed AGB technique and show that the quantization distortion indeed decreases with the transmit correlation coefficient.

\subsection{AGB Algorithm}
\label{subsec:AGB_secIII_subI}

As mentioned, the AGB algorithm reduces the dimension of the channel vector from $N_{t}$ to $N_{g}$ ($N_{t}>N_{g}$) by mapping multiple correlated antenna elements to a single representative value (see Fig. \ref{fig:AGB_Fig3}).
While a conventional scheme employs all feedback resources ($B$ bits) to express the quantized channel vector, the proposed method uses a part of the feedback resources to quantize the (reduced dimension) channel vector and the rest to express the grouping pattern.
Both basestation and user terminal share a codebook of channel vector and grouping pattern matrices, and thus the receiver feeds back the index of these.
In order to support this operation, we divide the feedback resources into two parts: a header ($B_{p}$ bits) to indicate the antenna group pattern and a payload ($B$-$B_{p}$ bits) to represent an index of the quantized channel vector (see Fig. \ref{fig:AGB_Fig2}). Since antenna grouping based quantization is performed separately for each user, in the sequel we focus on the operation of a single user and drop the user index $k$.

Suppose there are $N_{P}=2^{B_{p}}$ antenna group patterns, then $N_{P}$ distinct reduced dimension channel vectors are generated.
Each pattern converts an $N_{t}$-dimensional channel vector into an $N_{g}$-dimensional vector by multiplying the channel vector by a grouping matrix $\mathbf{G}^{(i)} \in \mathbb{R}^{N_{g} \times N_{t}}$.
The reduced dimension channel vector $\mathbf{h}_{r}^{(i)} \in \mathbb{C}^{N_{g}}$ of the group pattern $i$ is
\begin{equation}
\label{eq:grouping_mapping}
\mathbf{h}_{r}^{(i)} = \mathbf{G}^{(i)} \mathbf{h}, \,\,\,\, i=1,\cdots,N_{P}.
\end{equation}
\begin{figure}[t]
\begin{center}
\subfigure[ ]{
\psfig{file=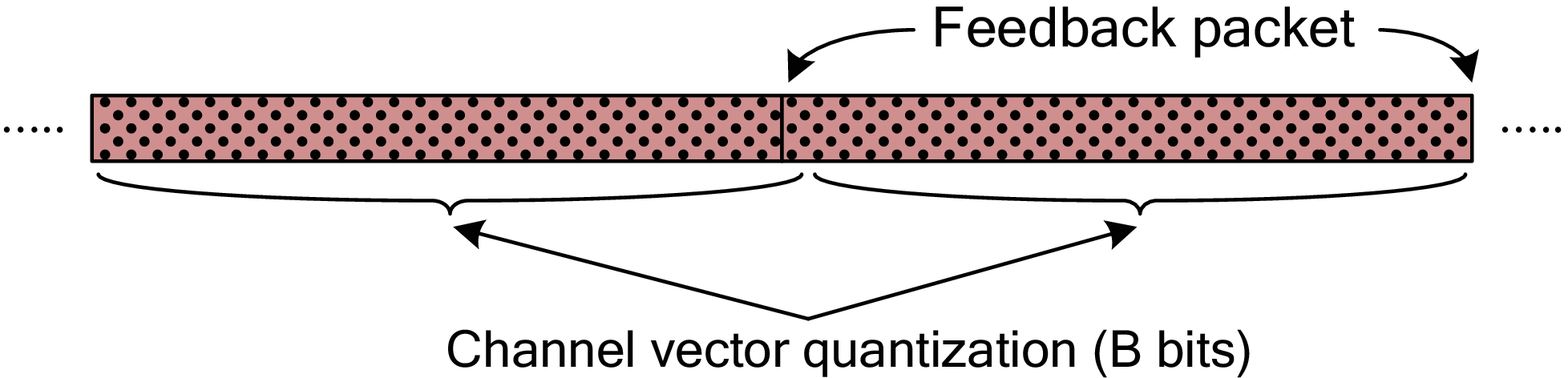,height=24mm,width=70mm} }
\subfigure[ ]{
\psfig{file=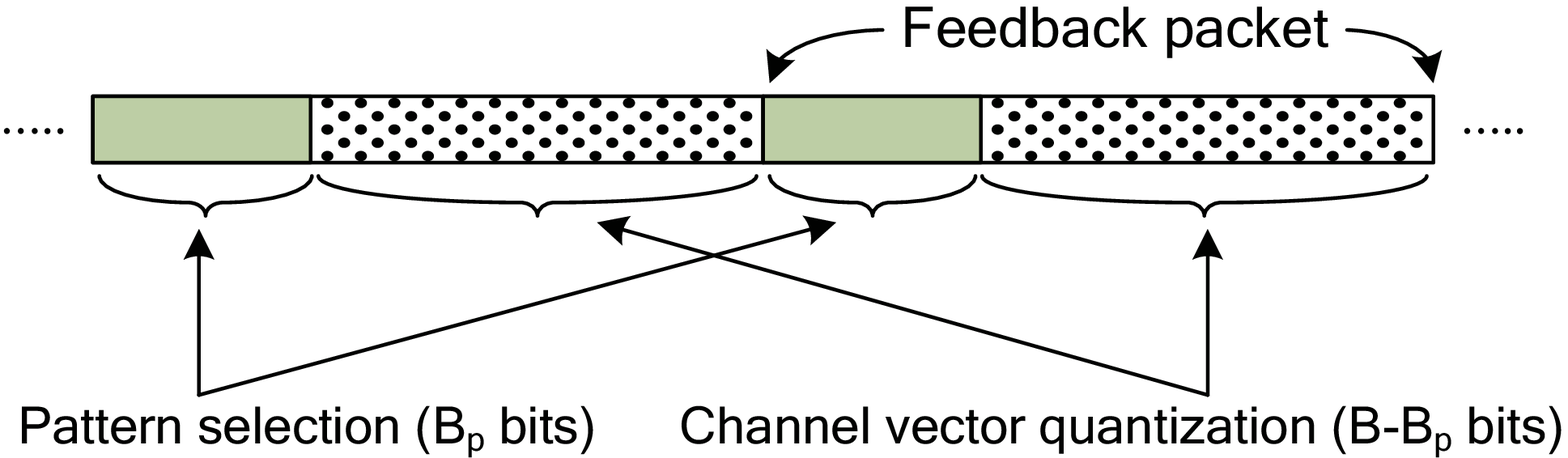,height=24mm,width=72mm} }
\caption{Feedback packet structure: (a) conventional method and (b) proposed method.
}
\label{fig:AGB_Fig2}
\end{center}
\vspace{-0.2cm}
\end{figure}
%
In Fig. \ref{fig:AGB_Fig4}, we illustrate the concept of antenna group patterns.
One simple way to generate the reduced dimension channel vector is to average the channel coefficients in an antenna group.
For example, if $\mathbf{h}=[\textrm{h}_{1} \,\, \textrm{h}_{2} \,\, \textrm{h}_{3} \,\, \textrm{h}_{4}]^{T}$ and two adjacent channel coefficients (first and second, third and fourth) are grouped, the mapping matrix is
\begin{equation} \label{eq:part}
\mathbf{G}^{(i)} = \begin{bmatrix} \frac{1}{2} & \frac{1}{2} & 0 & 0  \\
0 & 0 & \frac{1}{2} & \frac{1}{2}  \end{bmatrix}
\end{equation}
and
$\mathbf{h}_{r}^{(i)}$ $= \mathbf{G}^{(i)}\mathbf{h}=$ $[\frac{\textrm{h}_{1}+\textrm{h}_{2}}{2} \,\,$ $\frac{\textrm{h}_{3}+\textrm{h}_{4}}{2}]^{T}$.

Once the reduced dimension channel vector $\mathbf{h}_{r}^{(i)}$ is obtained,
$\mathbf{h}_{r}^{(i)}$ is quantized by a $B-B_{p}$ bit codebook $\mathcal{C} = \{\mathbf{c}_{1}, \cdots, \mathbf{c}_{2^{B-B{p}}} \}$.
It is worth mentioning that a codebook designed for i.i.d channels is not a proper choice for correlated channels so that we use a channel statistic-based codebook for channel vector quantization \cite{Love} (see Section III.C for details).
The codeword $\hat{\mathbf{h}}_{r}^{(i)}$ maximizing the absolute inner product with $\mathbf{h}_{r}^{(i)}$ is chosen as
\begin{equation}
\label{eq:codebook_index}
\hat{\mathbf{h}}_{r}^{(i)} = \arg \max_{\mathbf{c} \in \mathcal{C}} |\bar{\mathbf{h}}_{r}^{(i) H} \mathbf{c}|^{2}, \,\,\,\, i=1,\cdots,N_{P}
\end{equation}
where $\bar{\mathbf{h}}_{r}^{(i)} = \frac{\mathbf{h}_{r}^{(i)}}{\| \mathbf{h}_{r}^{(i)} \|}$ is the direction of the reduced dimension channel vector for the $i$-th pattern.
This process is repeated for each group pattern and $N_{p}$ candidate codewords $\hat{\mathbf{h}}_{r}^{(i)}$, $i=1,\cdots,N_{P}$, are chosen in total.
%

\begin{figure*}[t]
\centerline{\psfig{figure=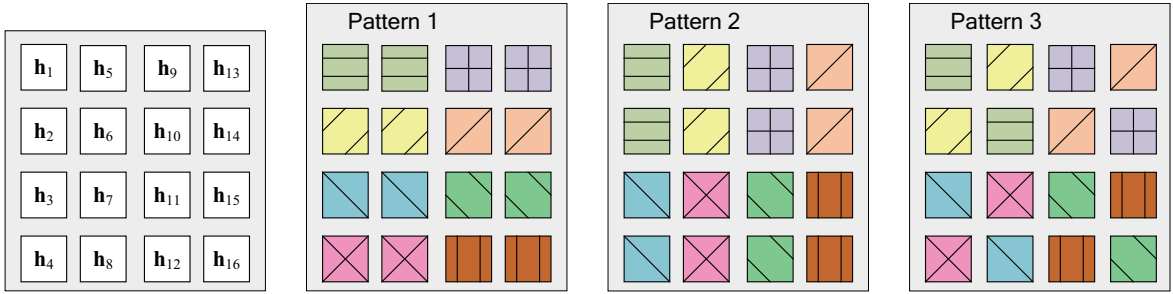, width=155mm}}
\caption{Example of antenna group patterns ($N_{t}=16, N_{g}=8, N_{P}=3$). Antenna elements belonging to the same pattern are mapped to one representative value.} \label{fig:AGB_Fig4}
\vspace{-0.2cm}
\end{figure*}

Once $N_{P}$ candidate codewords are obtained, we need to select the codeword that minimizes the distortion between $\bar{\mathbf{h}}$ and $\hat{\mathbf{h}}_{r}^{(i)}$.
We note that the direct comparison between $\hat{\mathbf{h}}$ and $\hat{\mathbf{h}}_{r}^{(i)}$ is not possible since the dimension of $\hat{\mathbf{h}}_{r}^{(i)} \in \mathbb{C}^{N_{g}}$ is smaller than that of the original channel vector $\mathbf{h} \in \mathbb{C}^{N_{t}}$.
In computing the distortion defined as $D(\mathbf{h},\tilde{\mathbf{h}}_{r}^{(i)}) = E [ \| \mathbf{h} \|^{2} ( 1 - | \bar{\mathbf{h}}^{H} \tilde{\mathbf{h}}_{r}^{(i)} |^{2} ) ] $ caused by the grouping and quantization, therefore, we use $\tilde{\mathbf{h}}_{r}^{(i)} \in \mathbb{C}^{N_{t}}$, an expanded version of $\hat{\mathbf{h}}_{r}^{(i)}$.
The expansion process, which essentially is done by copying each element in $\hat{\mathbf{h}}_{r}^{(i)}$ to $\frac{N_{t}}{N_{g}}$ elements in $\tilde{\mathbf{h}}_{r}^{(i)}$, is performed by multiplying an expansion matrix $\mathbf{E}^{(i)} \in \mathbb{R}^{N_{t} \times N_{g}}$ to $\hat{\mathbf{h}}_{r}^{(i)}$.
The expanded quantized vector $\tilde{\mathbf{h}}_{r}^{(i)}$ is expressed as
\begin{equation}
\label{eq:demapping}
\tilde{\mathbf{h}}_{r}^{(i)} = \mathbf{E}^{(i)}\hat{\mathbf{h}}_{r}^{(i)}, \,\,\,\, i=1,\cdots,N_{P}
\end{equation}
where $\mathbf{E}^{(i)} = \kappa \mathbf{G}^{(i)T}$ and satisfies $\mathbf{G}^{(i)} \mathbf{E}^{(i)} = \mathbf{I}_{N_{g}}$ ($\kappa=\frac{N_{t}}{N_{g}}$).
For example, for the grouping matrix in (\ref{eq:part}),
$\mathbf{E}^{(i)} = \kappa \mathbf{G}^{(i)T} = \begin{bmatrix} 1 & 1 & 0 & 0  \\
0 & 0 & 1 & 1 \end{bmatrix}^{T}$
and the expanded quantized vector is
\begin{equation}
\label{eq:hr_doublehat}
\tilde{\mathbf{h}}_{r}^{(i)} = \mathbf{E}^{(i)} \hat{\mathbf{h}}_{r}^{(i)}= \kappa \mathbf{G}^{(i)T} \hat{\mathbf{h}}_{r}^{(i)} = \left[\hat{\textrm{h}}_{r,1}^{(i)} \,\, \hat{\textrm{h}}_{r,1}^{(i)} \,\,\hat{\textrm{h}}_{r,2}^{(i)} \,\, \hat{\textrm{h}}_{r,2}^{(i)} \right]^{T}.
\end{equation}
The group pattern index $i^{\ast}$ minimizing the distortion between $\mathbf{h}$ and $\tilde{\mathbf{h}}_{r}^{(i)}$ is
\begin{equation}
\label{eq:patter_index}
i^{\ast} = \arg \min_{i=1,\cdots,N_{P}} D(\mathbf{h},\tilde{\mathbf{h}}_{r}^{(i)}).
\end{equation}

Once the pattern index $i^{\ast}$ is obtained, this index and the corresponding codeword index are sent to the basestation.
After receiving the pattern index and codeword index of all user terminals, the basestation decompresses the reduced dimension channel vector via the expansion ($\hat{\mathbf{h}} = \mathbf{E}^{(i^{\ast})} \hat{\mathbf{h}}_{r}^{(i^{\ast})}$) and then performs the beamforming using the composite channel matrix $\hat{\mathbf{H}}$.
A block diagram of the proposed AGB algorithm is depicted in Fig. \ref{fig:AGB_Fig5}.

\subsection{Antenna Group Pattern Generation}
\label{subsec:AGB_secIII_subII}

Since multiple correlated antenna elements are mapped to a single representative value, the AGB algorithm is sensitive to the choice of the antenna group pattern.
Thus, selecting the best pattern among all possible combinations would be an ideal option.
However, since the number of patterns increases exponentially with the number of transmit antennas, it is not possible to investigate all possible patterns for the massive MIMO systems.
Without doubt, a simple yet effective pattern design is crucial to the success of the AGB algorithm.

One easy and intuitive way to construct an antenna group pattern $\mathbf{E}^{(i)}$ is to group highly correlated antenna elements together.
Typically, adjacent antenna elements are highly correlated so that the grouping of nearby antenna elements would be a desirable option in practice (see the example in Fig. \ref{fig:AGB_Fig4}).
Alternatively, one can consider Grassmannian subspace packing in the design of the antenna group patterns \cite{Lovesubspace}.
The main goal of Grassmannian subspace packing is, when the subspace distance metric and the number of feedback bits $B$ are provided, to find a set of $2^{B}$ subspaces in $\mathcal{G}(N_{t},m)$ that maximizes the minimum subspace distance between any pair of subspaces in the set.\footnote{$\mathcal{G}(N_{t},m)$ is the set of $m$-dimensional subspaces in $\mathbb{C}^{N_{t}}$ (or $\mathbb{R}^{N_{t}})$.}
The chordal distance has been popularly used as a metric to measure the distance \cite{love2003}.
Our task of generating the pattern set is similar in spirit to the Grassmannian subspace packing based codebook generation in the sense that we construct a pattern set (containing $2^{B_{p}}$ patterns) from all possible pattern candidates ($\mathbf{E}^{(i)} \in \mathbb{R}^{N_{t} \times N_{g}}$) using a distance metric exploiting the spatial correlation among antenna elements.

In the first step of the pattern set design, we compute the quasi-correlation matrix norm $\|\tilde{\mathbf{R}}_{t}^{(i)} \|_{F}$ to measure the spatial proximity of the antenna elements in the antenna group. The quasi-correlation matrix $\tilde{\mathbf{R}}_{t}^{(i)}$, defined as $\tilde{\mathbf{R}}_{t}^{(i)} = \mathbf{R}_{t}^{1/2} \mathbf{E}^{(i)}$, captures the actual influence of the pattern $\mathbf{E}^{(i)}$ on the transmit correlation matrix $\mathbf{R}_{t}$.
In general, a pattern generated by grouping closely spaced antenna elements tends to have a higher quasi-correlation matrix norm than that generated by grouping antenna elements apart.
Thus, one can deduce that a pattern with a large-correlation matrix norm exhibits lower grouping loss than that with a small quasi-correlation matrix norm.
 For patterns with high quasi-correlation matrix norm, we perform subspace packing to generate $2^{B_{p}}$ patterns (expansion matrices) maximizing the minimum distance metric between any pair of subspaces.
In measuring the distance, we use the correlation matrix distance $d_\textrm{corr}(\mathbf{A}, \mathbf{B})$ between two matrices $\mathbf{A}$ and $\mathbf{B}$ \cite{Herdin}
\begin{equation}
\label{eq:correlation_matrix_distance}
d_\textrm{corr}(\mathbf{A}, \mathbf{B}) = 1 - \frac{\textrm{tr}(\mathbf{A}^{H}\mathbf{B})}{\| \mathbf{A} \|_{\textrm{F}} \| \mathbf{B} \|_{\textrm{F}}}.
\end{equation}
Note that $d_\textrm{corr}(\mathbf{A}, \mathbf{B})$ measures the orthogonality between two correlation matrices $\mathbf{A}$ and $\mathbf{B}$.
 When the correlation matrices are equal up to a scaling factor, $d_{\textrm{corr}}$ is minimized ($d_{\textrm{corr}}=0$). Whereas, when the inner product between the vectorized correlation matrices is zero (i.e., $\textrm{vec}(\mathbf{A})$ and $\textrm{vec}(\mathbf{B})$ are orthogonal), $d_{\textrm{corr}}$ is maximized ($d_{\textrm{corr}}=1$).
 In our numerical simulations, we show that the proposed subspace packing approach achieves a substantial gain over an approach using randomly selected patterns (see Section IV.B).
We summarize the antenna group pattern generation procedures in Table I.

\begin{figure*}[t]
\centerline{\psfig{figure=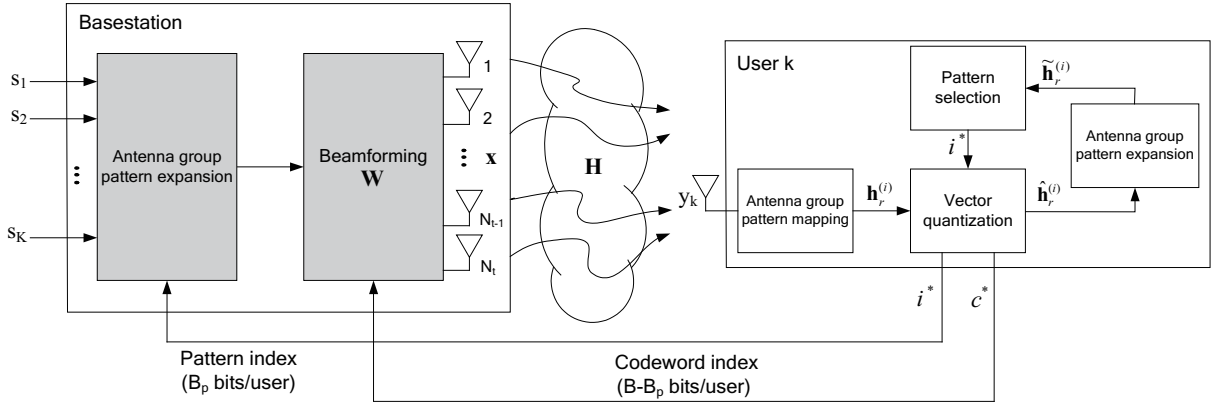, width=160mm}}
\caption{Overall transceiver structure of the proposed AGB technique.
   } \label{fig:AGB_Fig5}
\vspace{-0.2cm}
\end{figure*}
As a further means to lessen the computational burden of pattern generation process, we partition the antenna array into multiple sub-arrays and then apply Grassmannian subspace packing for each sub-array. In the partitioning process, we basically divide the antenna array such that the divided sub-arrays are close to the square matrix. Specifically, when the partitioning level is two, for the $N_{t1} \times N_{t2}$ dimensional antenna array ($N_t = N_{t1} N_{t2}$), we divide the axis with larger dimension. That is, if $N_{t1} \geq N_{t2}$, then the dimension of each sub-array becomes $\frac{N_{t1}}{2} \times N_{t2}$ (see Fig. \ref{fig:AGB_Fig6}). When partitioning level is larger than two, we perform the same procedure for each partitioned sub-array.
In doing so, the number of antenna group pattern candidates is reduced substantially. For example, if $N_{t}=16, N_{g}=8$, and $M=1$ (no partition), then the total number of antenna group pattern candidates $N_{\textrm{max}}$ is about $2 \times 10^{6}$ (see Table I).
Whereas, if the antenna array is partitioned ($M=2$), then $N_{\textrm{max}}$ will be expressed as $N_{\textrm{max}} = \prod_{i=1}^{M} N_{\textrm{max},i}$ where $N_{\textrm{max},i}$ is the total number of candidates for the $i$-th sub-array. Since $N_{\textrm{max},i}=105$ in this case, $N_{\textrm{max}}\approx10^{4}$. Note that since the pattern generation process is performed off the shelf, this process does not affect the real-time operation.

\subsection{Quantization Distortion Analysis}
\label{subsec:AGB_secIII_subIV}
We now turn to the performance analysis of the AGB algorithm.
In our analysis, we analyze the distortion $D$ induced by the quantization of the channel direction vector $\bar{\mathbf{h}} = \frac{\mathbf{h}}{\| \mathbf{h} \|}$,
which is defined as
\begin{align}
\label{eq:distortion}
D &= E \left[ \| \mathbf{h} \|^{2}  - |\mathbf{h}^{H} \tilde{\mathbf{h}}_{r} |^{2} \right] \nonumber \\
&= E \left[ \| \mathbf{h} \|^{2} \left( 1 - | \bar{\mathbf{h}}^{H} \tilde{\mathbf{h}}_{r} |^{2} \right) \right]
\end{align}
where $\tilde{\mathbf{h}}_{r}$ is the expanded version of the quantized vector $\hat{\mathbf{h}}_{r}$ (see (\ref{eq:demapping})).
\begin{figure}[t]
\centerline{\psfig{figure=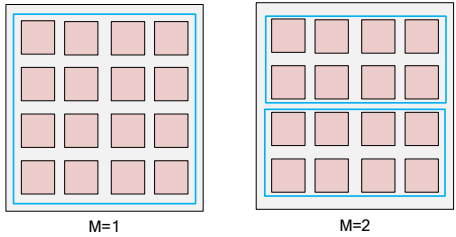, width=60mm}}
\caption{Antenna group pattern of the partitioned sub-arrays when $M=1$ (no partition) and $M=2$.} \label{fig:AGB_Fig6}
\vspace{-0.2cm}
\end{figure}

In the evaluation of the distortion $D$, we use the quantization cell upper bound (QUB) \cite{YooFeedback}. As mentioned, since a codebook designed for the i.i.d channels is not the right choice for correlated channels, we employ a channel statistic-based codebook obtained by applying the transmit correlation matrix $\mathbf{R}_{t}^{1/2}$ to the codebook generated from the Grassmannian line packing.
Let $\mathbf{f}_{i} \in \mathbb{C}^{r}$ be the $i$-th unit norm vector generated from the Grassmannian line packing, then the set of $B$-bit codewords for the channel statistic-based codebook is \cite{Love}
\begin{equation}
\label{eq:CDITcodebook}
\mathcal{C} = \left\{\mathbf{c}_{1},\cdots,\mathbf{c}_{2^{B}}\right\}=\left\{ \frac{\mathbf{R}_{t}^{1/2} \mathbf{f}_{1}}{\|\mathbf{R}_{t}^{1/2} \mathbf{f}_{1}\|},\cdots,\frac{\mathbf{R}_{t}^{1/2} \mathbf{f}_{2^{B}}}{\|\mathbf{R}_{t}^{1/2} \mathbf{f}_{2^{B}}\|}\right\}.
\end{equation}
When the channel statistic-based codebook is used, the normalized distortion $\frac{D}{E[\|\mathbf{h}\|^{2}]}=1-|\bar{\mathbf{h}}^{H} \mathbf{c}_{i}|^{2}$ between the channel direction vector $\bar{\mathbf{h}}$ and codebook vector $\mathbf{c}_{i}$ can be upper bounded as \cite{Clerckx}
\begin{eqnarray}
\label{eq:quantization_cell_Yoo}
\mathcal{R}_{i} \approx \{ \bar{\mathbf{h}}: 1-|\bar{\mathbf{h}}^{H} \mathbf{c}_{i}|^{2} \leq \delta\}
\end{eqnarray}
where $\delta = \frac{\sigma_{2}^{2}}{\sigma_{1}^{2}} 2^{-\frac{B}{r-1}}$ $(\sigma_{i}$ is the $i$-th largest singular value of the transmit correlation matrix $\mathbf{R}_{t} \in \mathbb{C}^{r \times r})$ and $B$ is the number of quantization bits.

\begin{table*}
\begin{center}
\caption{Summary of the antenna group pattern generation.}  
\begin{tabular}{ll} \hline
  Initialization      & $B_{p}$: the number of bits for the pattern set \\
                      & $\mathcal{S}$: the set of patterns to be selected \\   \hline
 Main operation & 1) Initialize the index set $\Omega=\{1,\ldots,N_{\textrm{max}}\}$ where \\
 & \hspace{1.2cm} $N_{\textrm{max}} = \frac{\prod_{n=0}^{N_{g}-1} {N_{t} - n \kappa \choose \kappa}}{N_{g}!}$. \\
           & \hspace{0.3cm} $\kappa = \frac{N_{t}}{N_{g}}$ is the number of elements in an antenna group. \\
           & 2) For each pattern $i \in \Omega$, calculate the Frobenius norm of the quasi-correlation matrix \\
           & \hspace{1.2cm}  $r_{i} = \| \tilde{\mathbf{R}}_{t}^{(i)} \|_{F}$. \\
           & \hspace{0.3cm} Without loss of generality, assume $r_{1} \geq r_{2} \geq \cdots \geq r_{N_{\textrm{max}}}$. \\
           & 3) Choose $J(\geq 2^{B_{p}})$ patterns $\mathcal{T} = \{r_{1},\ldots,r_{J}\}$. \\
           & 4) Apply the subspace packing to $\mathcal{T}$ to generate the pattern set $\mathcal{S}$. \\
           & \hspace{0.3cm} Construct $N_{c} = {J \choose 2^{B_{p}}}$ candidate sets $\{\mathcal{S}_{k}\}_{k=1}^{N_{c}}$ where $\mathcal{S}_{k}=\{\tilde{\mathbf{R}}_{t}^{k,1},\tilde{\mathbf{R}}_{t}^{k,2},\cdots,\tilde{\mathbf{R}}_{t}^{k,2^{B_{p}}} \}.$\\
           & \hspace{0.3cm} $\tilde{\mathbf{R}}_{t}^{k,i}$ is the $i$-th quasi-correlation matrix from the $k$-th candidate set. \\
           & 5) Calculate the minimum $d_{\textrm{corr}}$ of a $\mathcal{S}_{k}$ \\
           & \hspace{1.2cm} $d_{k,\textrm{min}}(\mathcal{S}_{k}) =  \min_{1 \leq m \leq n \leq 2^{B_{p}}} d_{\textrm{corr}} (\tilde{\mathbf{R}}_{t}^{k,m},\tilde{\mathbf{R}}_{t}^{k,n})$. \\
           & \hspace{0.3cm} Decide the pattern set $\mathcal{S}$ \\
           & \hspace{1.2cm} $\mathcal{S} = \arg \max_{k=1,\cdots,N_{c}} d_{k,\textrm{min}}(\mathcal{S}_{k})$. \\ \hline
\end{tabular}
\end{center}
\vspace{-0.2cm}
\label{tb:AGB_PatternTable}
\end{table*}
In our analysis, we restrict our attention to the scenario where two antenna elements are mapped to a single representative value for mathematical tractability.
Nevertheless, since the key factor affecting the quantization distortion is the transmit correlation coefficient (see (\ref{eq:proposition_approximation})), our results can be readily applied to the general scenario where more than two antenna elements are grouped together.
The minimal set of assumptions used for the analytical tractability are as follows:
\begin{description}
\item[A-i)] The channel vector $\bar{\mathbf{h}}$ is partitioned into two subvectors $\bar{\mathbf{h}}_{\textrm{A}}$ and $\bar{\mathbf{h}}_{\textrm{B}}$.
    $\bar{\mathbf{h}}_{\textrm{A}}$ and $\bar{\mathbf{h}}_{\textrm{B}}$ are composed of odd and even entries in $\bar{\mathbf{h}}$ (i.e., $\bar{\mathbf{h}}_{\textrm{A}} = [\mathbf{h}_{1} \,\, \mathbf{h}_{3} \,\, \cdots]^{T}$, $\bar{\mathbf{h}}_{\textrm{B}} = [\mathbf{h}_{2} \,\, \mathbf{h}_{4} \,\, \cdots]^{T}$). Thus, $N_{g} = \frac{N_{t}}{2}$.
\item[A-ii)] The reduced dimension channel vector $\mathbf{h}_{r}$ is designed such that $\mathbf{h}_{r}=\bar{\mathbf{h}}_{\textrm{A}}$.
For example, if $\bar{\mathbf{h}} = [\bar{\textrm{h}}_{1} \,\, \bar{\textrm{h}}_{2} \,\, \bar{\textrm{h}}_{3} \,\, \bar{\textrm{h}}_{4}]^{T}$, then the antenna grouping pattern is
\begin{equation} \label{eq:part_pattern}
\mathbf{G} = \begin{bmatrix} 1 & 0 & 0 & 0  \\
0 & 0 & 1 & 0  \end{bmatrix}
\end{equation}
and hence $\mathbf{h}_{r} = \bar{\mathbf{h}}_{\textrm{A}}=[\bar{\textrm{h}}_{1} \,\, \bar{\textrm{h}}_{3}]^{T}$ ($\bar{\mathbf{h}}_{\textrm{B}} = [\bar{\textrm{h}}_{2} \,\, \bar{\textrm{h}}_{4}]^{T}$).
\item[A-iii)] Antenna elements in a group are highly correlated. That is, $E[|\bar{\mathbf{h}}_{\textrm{A}}^{H} \hat{\mathbf{h}}_{r}|^{2}] \approx E[|\bar{\mathbf{h}}_{\textrm{B}}^{H} \hat{\mathbf{h}}_{r}|^{2}]$ where $\hat{\mathbf{h}}_{r}$ is the quantized vector of $\mathbf{h}_{r}$ and generated from the channel statistic-based codebook. Note that this assumption justifies the use of antenna grouping pattern in (\ref{eq:part_pattern}).
\end{description}
It is worth mentioning that depending on the way of grouping elements of antenna array, there are several ways to generate subvectors $\bar{\mathbf{h}}_{\textrm{A}}$ and $\bar{\mathbf{h}}_{\textrm{B}}$. Also, the group pattern used in (17) might be worse than the pattern generated by the subspace packing or proposed in (8). Thus, assumptions in A-i) and A-ii) would be clearly pessimistic, but makes our analysis tractable.
The following proposition provides an approximate upper bound of the quantization distortion $D$ under these assumptions.
\begin{proposition}
\label{lemma1}
The quantization distortion $D$ of the AGB algorithm under the channel statistic-based codebook satisfies
\begin{eqnarray}
\label{eq:proposition}
D \lesssim  N_{t} \delta + \xi \sqrt{2 N_{t} \delta}
\end{eqnarray}
%
where $\delta= \frac{\sigma_{2}^{2}}{\sigma_{1}^{2}} 2^{-\frac{B-B_{p}}{N_{g}-1}}$ is an upper bound of the normalized distortion between $\bar{\mathbf{h}}_{\textrm{A}}$ and $\hat{\mathbf{h}}_{r}$ as defined in (\ref{eq:quantization_cell_Yoo}) ($\sigma_{i}$ is the $i$-th largest singular value of $\mathbf{R}_{t,\textrm{A}}$)\footnote{For example, if $\mathbf{R}_{t} = \begin{bmatrix} 1 & \rho & \rho^{2} & \rho^{3}  \\ \rho & 1 & \rho & \rho^{2} \\ \rho^{2} & \rho & 1 & \rho \\ \rho^{3} & \rho^{2} & \rho & 1 \end{bmatrix}$ and $\textrm{A}=\{1,3\}$, then $\mathbf{R}_{t,\textrm{A}} = \begin{bmatrix} 1 & \rho^{2}  \\ \rho^{2} & 1 \end{bmatrix}$.} and $\xi$ is the correlation coefficient between two random variables $\| \mathbf{h}\|^{2}$ and $1 - | \bar{\mathbf{h}}^{H} \tilde{\mathbf{h}}_{r} |^{2}$.
\end{proposition}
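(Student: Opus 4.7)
The plan is to open up $D = E[XY]$ with $X = \|\mathbf{h}\|^{2}$ and $Y = 1 - |\bar{\mathbf{h}}^{H}\tilde{\mathbf{h}}_{r}|^{2}$, and then apply the covariance identity
\begin{equation*}
E[XY] = E[X]\,E[Y] + \xi\,\sqrt{\mathrm{Var}(X)\,\mathrm{Var}(Y)} .
\end{equation*}
This is precisely what injects the correlation coefficient $\xi$ as named in the proposition, and it splits the task into bounding (i) the mean product $E[X]E[Y]$, which should reproduce the $N_{t}\delta$ piece, and (ii) the standard-deviation product $\sqrt{\mathrm{Var}(X)\mathrm{Var}(Y)}$, which should reproduce the $\sqrt{2N_{t}\delta}$ piece.

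For the mean product, $E[X] = \mathrm{tr}(\mathbf{R}_{t}) = N_{t}$ follows from the unit-diagonal normalization of the transmit correlation matrix. To bound $E[Y]$, I would use assumptions A-i and A-ii together with the copy structure of $\mathbf{E}$ exhibited in (\ref{eq:hr_doublehat}) to rewrite
\begin{equation*}
\bar{\mathbf{h}}^{H}\tilde{\mathbf{h}}_{r} = \bar{\mathbf{h}}_{A}^{H}\hat{\mathbf{h}}_{r} + \bar{\mathbf{h}}_{B}^{H}\hat{\mathbf{h}}_{r},
\end{equation*}
square it, and take expectation, invoking A-iii to replace the $\bar{\mathbf{h}}_{B}$-piece and the resulting cross term in expectation by the matched $\bar{\mathbf{h}}_{A}$-piece. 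This collapses $E[|\bar{\mathbf{h}}^{H}\tilde{\mathbf{h}}_{r}|^{2}]$ to an effective multiple of $E[|\bar{\mathbf{h}}_{A}^{H}\hat{\mathbf{h}}_{r}|^{2}]$. Applying the QUB in (\ref{eq:quantization_cell_Yoo}) to the reduced-dimensional quantization — codebook size $B-B_{p}$ in ambient dimension $N_{g}$, which is exactly what produces the exponent $-(B-B_{p})/(N_{g}-1)$ inside $\delta$ — then yields $E[Y]\lesssim\delta$, so that $E[X]E[Y]\le N_{t}\delta$.

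For the standard-deviation product I would bound each variance separately. Since $\mathbf{h}\sim\mathcal{CN}(\mathbf{0},\mathbf{R}_{t})$, the standard quadratic-form calculation gives $\mathrm{Var}(\|\mathbf{h}\|^{2}) = \mathrm{tr}(\mathbf{R}_{t}^{2})$, which is of order $2N_{t}$ in the mildly correlated regime targeted here (taking into account the real/imaginary split of the complex Gaussian). For $Y$, the QUB forces $0\le Y\le\delta$ on the quantization cell, so $\mathrm{Var}(Y)\le E[Y^{2}]\le\delta\,E[Y]\le\delta$. Multiplying gives $\sqrt{\mathrm{Var}(X)\mathrm{Var}(Y)}\le\sqrt{2N_{t}\delta}$, and summing with the mean-product piece delivers the claimed inequality.

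The main obstacle is the middle step: honestly tying the full-channel distortion $1-|\bar{\mathbf{h}}^{H}\tilde{\mathbf{h}}_{r}|^{2}$ back to the reduced-dimensional QUB $1-|\bar{\mathbf{h}}_{r}^{H}\hat{\mathbf{h}}_{r}|^{2}\le\delta$, which only lives on $\bar{\mathbf{h}}_{A}$. Expanding $|\bar{\mathbf{h}}_{A}^{H}\hat{\mathbf{h}}_{r}+\bar{\mathbf{h}}_{B}^{H}\hat{\mathbf{h}}_{r}|^{2}$ produces a cross term $2\,\mathrm{Re}\bigl(\bar{\mathbf{h}}_{A}^{H}\hat{\mathbf{h}}_{r}\,\hat{\mathbf{h}}_{r}^{H}\bar{\mathbf{h}}_{B}\bigr)$ that cannot be controlled without some form of near-equality between the two subgroups, which is exactly what A-iii is there to supply; a purely Cauchy--Schwarz bound on the cross term would inflate $\delta$ by a constant and destroy the clean $N_{t}\delta$ scaling. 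This approximation is precisely why the proposition is stated with ``$\lesssim$'' rather than ``$\le$''.
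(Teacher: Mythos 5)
Your skeleton is the paper's own: the covariance identity $E[XY]=E[X]E[Y]+\xi\sqrt{\mathrm{Var}(X)\mathrm{Var}(Y)}$, the decomposition $\bar{\mathbf{h}}^{H}\tilde{\mathbf{h}}_{r}=\bar{\mathbf{h}}_{\textrm{A}}^{H}\hat{\mathbf{h}}_{r}+\bar{\mathbf{h}}_{\textrm{B}}^{H}\hat{\mathbf{h}}_{r}$, the QUB in dimension $N_{g}$ with $B-B_{p}$ bits yielding $E[Y]\lesssim\delta$, and $E[\|\mathbf{h}\|^{2}]=N_{t}$. The genuine gap is in your standard-deviation product. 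The claim that ``the QUB forces $0\le Y\le\delta$'' is false: the QUB constrains only the \emph{reduced-dimension, normalized} quantity $1-|\bar{\mathbf{h}}_{\textrm{A}}^{H}\hat{\mathbf{h}}_{r}|^{2}/\|\bar{\mathbf{h}}_{\textrm{A}}\|^{2}$, whereas $Y=1-|\bar{\mathbf{h}}^{H}\tilde{\mathbf{h}}_{r}|^{2}$ also contains the grouping loss from $\bar{\mathbf{h}}_{\textrm{B}}\neq\bar{\mathbf{h}}_{\textrm{A}}$ and the cross term, which are controlled only \emph{in expectation} (via A-iii and the cross-term argument); pointwise, $Y$ can be order one even when the reduced-dimension quantizer lands inside its cell. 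The paper sidesteps this with the crude bound $\mathrm{Var}(Y)=E[|\bar{\mathbf{h}}^{H}\tilde{\mathbf{h}}_{r}|^{4}]-\left(E[|\bar{\mathbf{h}}^{H}\tilde{\mathbf{h}}_{r}|^{2}]\right)^{2}\le 1-\left(E[|\bar{\mathbf{h}}^{H}\tilde{\mathbf{h}}_{r}|^{2}]\right)^{2}\le 1-(1-\delta)^{2}\approx 2\delta$, and this is where the factor $2$ in $\sqrt{2N_{t}\delta}$ actually originates. Your other variance is also wrong: for circularly symmetric $\mathbf{h}\sim\mathcal{CN}(\mathbf{0},\mathbf{R}_{t})$ one has $\mathrm{Var}(\|\mathbf{h}\|^{2})=\mathrm{tr}(\mathbf{R}_{t}^{2})$ with \emph{no} extra factor of $2$ (that factor belongs to real Gaussian quadratic forms), and for $\mathbf{R}_{t}=\mathbf{I}$ this equals $N_{t}$, the value the paper uses. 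So your $\mathrm{Var}(X)\approx 2N_{t}$ and $\mathrm{Var}(Y)\le\delta$ are each off by a factor of $2$ in opposite directions; the final $\xi\sqrt{2N_{t}\delta}$ is reproduced by accident, not by argument.

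A secondary inaccuracy: you attribute control of the cross term to A-iii, but A-iii only equates the second moments of the A- and B-pieces. The paper disposes of the cross term by a separate argument (its Appendix A): writing $(\bar{\mathbf{h}}_{\textrm{A}}^{H}\hat{\mathbf{h}}_{r})^{\ast}(\bar{\mathbf{h}}_{\textrm{B}}^{H}\hat{\mathbf{h}}_{r})$ in polar form, using $\theta_{1}\approx 0$, $r_{1}\approx 1$ for the accurately quantized A-piece, and the approximate uniformity of the residual phase $\theta_{2}$ to conclude $E[\cos\theta_{2}]=0$, i.e., the cross term \emph{vanishes} in expectation. Your phrasing (``replace the resulting cross term by the matched $\bar{\mathbf{h}}_{\textrm{A}}$-piece''), taken literally, would give $E[|\bar{\mathbf{h}}^{H}\tilde{\mathbf{h}}_{r}|^{2}]\approx 4E[|\bar{\mathbf{h}}_{\textrm{A}}^{H}\hat{\mathbf{h}}_{r}|^{2}]\approx 2(1-\delta)$ and hence $1-E[|\bar{\mathbf{h}}^{H}\tilde{\mathbf{h}}_{r}|^{2}]\lesssim 2\delta-1$, which is negative for small $\delta$ — so matching the cross term to the A-piece breaks the derivation; it must be argued to be (approximately) zero. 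To repair your proof, replace the pointwise bound on $Y$ by the paper's moment bound on $\mathrm{Var}(Y)$ and take $\mathrm{Var}(\|\mathbf{h}\|^{2})=N_{t}$, after which the two routes coincide.
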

\begin{proof}
Using (\ref{eq:distortion}), we have
\begin{align}
\label{eq:distortion_upperbound}
D &= E \left[ \| \mathbf{h} \|^{2} \left( 1 - | \bar{\mathbf{h}}^{H} \tilde{\mathbf{h}}_{r} |^{2} \right) \right] \\
\label{eq:distortion_upperboundsub}
&= E \left[ \| \mathbf{h} \|^{2} \right] \left( 1 - E \left[| \bar{\mathbf{h}}^{H} \tilde{\mathbf{h}}_{r} |^{2} \right] \right) + Cov(\| \mathbf{h} \|^{2},1 - | \bar{\mathbf{h}}^{H} \tilde{\mathbf{h}}_{r} |^{2}) \\
\label{eq:distortion_upperboundsub2}
&= E \left[ \| \mathbf{h} \|^{2} \right] \left( 1 - E \left[| \bar{\mathbf{h}}^{H} \tilde{\mathbf{h}}_{r} |^{2} \right] \right)  \nonumber \\ \hspace{0.1cm} &+ \xi \sqrt{ Var \left(\| \mathbf{h} \|^{2} \right)}  \sqrt{ Var \left( 1 - | \bar{\mathbf{h}}^{H} \tilde{\mathbf{h}}_{r} |^{2} \right)} \\
\label{eq:distortion_upperboundsub3}
&\leq E \left[ \| \mathbf{h} \|^{2} \right] \left( 1 - E \left[| \bar{\mathbf{h}}^{H} \tilde{\mathbf{h}}_{r} |^{2}\right] \right)  \nonumber \\ \hspace{0.1cm} &+ \xi \sqrt{ Var \left[\| \mathbf{h} \|^{2} \right]}  \sqrt{ 1 - \left(E \left[| \bar{\mathbf{h}}^{H} \tilde{\mathbf{h}}_{r} |^{2} \right] \right)^{2}}
\end{align}
where (\ref{eq:distortion_upperboundsub}) is because $Cov(X,Y) = E[XY]-E[X]E[Y]$, (\ref{eq:distortion_upperboundsub2}) is because $Cov(X,Y) = \xi \sqrt{Var(X)} \sqrt{Var(Y)}$
and (\ref{eq:distortion_upperboundsub3}) is because $Var \left( 1 - | \bar{\mathbf{h}}^{H} \tilde{\mathbf{h}}_{r} |^{2} \right) = E \left[|\bar{\mathbf{h}}^{H} \tilde{\mathbf{h}}_{r}|^{4}\right] - \left(E \left[| \bar{\mathbf{h}}^{H} \tilde{\mathbf{h}}_{r} |^{2} \right] \right)^{2}  \leq 1 - \left(E \left[| \bar{\mathbf{h}}^{H} \tilde{\mathbf{h}}_{r} |^{2} \right] \right)^{2}$.

The normalized distortion term $1 - E \left[| \bar{\mathbf{h}}^{H} \tilde{\mathbf{h}}_{r} |^{2}\right]$ in the right-hand side of (\ref{eq:distortion_upperboundsub3}) is approximately upper bounded as
\begin{align}
\label{eq:distortion_change}
  & 1 - E \left[ |\bar{\mathbf{h}}^{H} \tilde{\mathbf{h}}_{r}|^{2} \right]  \nonumber \\ &\stackrel{(a)}{=} 1 - E \left[ |\bar{\mathbf{h}}_{\textrm{A}}^{H} \hat{\mathbf{h}}_{r} + \bar{\mathbf{h}}_{\textrm{B}}^{H} \hat{\mathbf{h}}_{r}|^{2} \right]  \nonumber \\
  & = 1 - E [ |\bar{\mathbf{h}}_{\textrm{A}}^{H} \hat{\mathbf{h}}_{r}|^{2} + |\bar{\mathbf{h}}_{\textrm{B}}^{H} \hat{\mathbf{h}}_{r}|^{2} + (\bar{\mathbf{h}}_{\textrm{A}}^{H} \hat{\mathbf{h}}_{r})^{\ast}(\bar{\mathbf{h}}_{\textrm{B}}^{H} \hat{\mathbf{h}}_{r}) \nonumber \\  &\,\,\,\,\,\, + (\bar{\mathbf{h}}_{\textrm{B}}^{H} \hat{\mathbf{h}}_{r})^{\ast}(\bar{\mathbf{h}}_{\textrm{A}}^{H} \hat{\mathbf{h}}_{r}) ] \nonumber \\
& = 1 - E \left[ |\bar{\mathbf{h}}_{\textrm{A}}^{H} \hat{\mathbf{h}}_{r}|^{2} \right] - E \left[ |\bar{\mathbf{h}}_{\textrm{B}}^{H} \hat{\mathbf{h}}_{r}|^{2} \right] \nonumber \\ &\,\,\,\,\,\, - 2 E \left[ \textrm{Re} (\bar{\mathbf{h}}_{\textrm{A}}^{H} \hat{\mathbf{h}}_{r})^{\ast}(\bar{\mathbf{h}}_{\textrm{B}}^{H} \hat{\mathbf{h}}_{r})\right] \nonumber \\
  &\stackrel{(b)}{\approx} 1 - E \left[ |\bar{\mathbf{h}}_{\textrm{A}}^{H} \hat{\mathbf{h}}_{r}|^{2} \right] - E \left[ |\bar{\mathbf{h}}_{\textrm{B}}^{H} \hat{\mathbf{h}}_{r}|^{2} \right] \nonumber \\
  &\stackrel{(c)}{=} 1 - 2 E \left[ |\bar{\mathbf{h}}_{\textrm{A}}^{H} \hat{\mathbf{h}}_{r}|^{2} \right]   \nonumber \\
   &\stackrel{(d)}{\lesssim}  1 -  2 (1-\delta ) E \left[ \|  \bar{\mathbf{h}}_{\textrm{A}}\|^{2} \right] \nonumber \\
   &\stackrel{(e)}{=} \delta
\end{align}
where (a) is because $|\bar{\mathbf{h}}^{H} \tilde{\mathbf{h}}_{r}|^{2} = |\bar{\mathbf{h}}_{\textrm{A}}^{H} \hat{\mathbf{h}}_{r} + \bar{\mathbf{h}}_{\textrm{B}}^{H} \hat{\mathbf{h}}_{r}|^{2}$, (b) follows from $E [ \textrm{Re} ((\bar{\mathbf{h}}_{\textrm{A}}^{H} \hat{\mathbf{h}}_{r})^{\ast}(\bar{\mathbf{h}}_{\textrm{B}}^{H} \hat{\mathbf{h}}_{r}))] \approx 0$ (see Appendix A),
(c) follows from A-iii), and (d) follows from the QUB in (\ref{eq:quantization_cell_Yoo}).
That is, by plugging $\bar{\mathbf{h}} = \frac{\bar{\mathbf{h}}_{\textrm{A}}}{\| \bar{\mathbf{h}}_{\textrm{A}} \|}$, $\mathcal{C} = \left\{\frac{\mathbf{R}_{t,\textrm{A}}^{1/2} \mathbf{f}_{1}}{\|\mathbf{R}_{t,\textrm{A}}^{1/2} \mathbf{f}_{1}\|},\cdots,\frac{\mathbf{R}_{t,\textrm{A}}^{1/2} \mathbf{f}_{2^{B-B_{p}}}}{\|\mathbf{R}_{t,\textrm{A}}^{1/2} \mathbf{f}_{2^{B-B_{p}}}\|}\right\}$, and $\delta= \frac{\sigma_{2}^{2}}{\sigma_{1}^{2}} 2^{-\frac{B-B_{p}}{N_{g}-1}}$ into (\ref{eq:quantization_cell_Yoo}), we get $E\left[|\bar{\mathbf{h}}_{\textrm{A}}^{H} \hat{\mathbf{h}}_{r}|^{2}\right] \geq (1-\delta) E\left[\| \bar{\mathbf{h}}_{\textrm{A}} \|^{2}\right]$.
Finally, (e) follows from $E \left[ \|  \bar{\mathbf{h}}_{\textrm{A}}\|^{2} \right] = \frac{N_{g}}{N_{t}} = \frac{1}{2}$.

Plugging (\ref{eq:distortion_change}) into (\ref{eq:distortion_upperboundsub3}), we have
\begin{align}
\label{eq:distortion_change3}
D &\leq E \left[ \| \mathbf{h} \|^{2} \right] \delta  + \xi \sqrt{ Var \left[\| \mathbf{h} \|^{2} \right]}  \sqrt{ 1-(1-\delta)^{2} }  \\
\label{eq:distortion_change32}
&= N_{t} \delta  + \xi \sqrt{N_{t} \left( 1-(1-\delta)^{2}  \right)} \\
\label{eq:distortion_change33}
&\approx  N_{t} \delta  + \xi \sqrt{2 N_{t} \delta}
\end{align}
where (\ref{eq:distortion_change32}) is because $E[\| \mathbf{h} \|^{2}] = N_{t}$ and $Var[\| \mathbf{h} \|^{2}] = N_{t}$, and (\ref{eq:distortion_change33}) is because $\delta(2-\delta) \approx 2\delta$ where $\delta \ll 2$, which is the desired result.
\end{proof}

We note that the relationship between the quantization distortion $D$ and the transmit antenna correlation is not clearly shown in (\ref{eq:proposition}).
When a specific correlation model is used, however, we can observe the relationship between the two.
For example, if the exponential correlation model is employed, the transmit correlation matrix $\mathbf{R}_{t}$ is expressed as \cite{Martin}
\begin{equation} \label{eq:exponential_matrixfirst}
\mathbf{R}_{t} = \begin{bmatrix} 1 & \rho & \cdots & \rho^{N_{t}-1}  \\ \rho^{\ast} & 1 & \cdots & \rho^{N_{t}-2} \\
\vdots & \vdots & \ddots & \vdots \\
(\rho^{\ast})^{N_{t}-1} & (\rho^{\ast})^{N_{t}-2} & \cdots & 1
\end{bmatrix}
\end{equation}
where $\rho = \alpha e^{j \theta}$ is the transmit correlation coefficient, and $\alpha$ is the magnitude of correlation coefficient, and $\theta$ is the phase of the coefficient.
When the number of transmit antennas $N_{t}$ is large, (non-ordered) singular value $\mu_{i}$ of $\mathbf{R}_{t}$ approximately behaves as\cite{Gray}
\begin{align}
\label{eq:eigenvalue_approx}
\mu_{i} &\approx \sum_{k= -(N_{t}-1)}^{N_{t}-1} \rho^{|k|} e^{j \frac{2 \pi i k}{N_{t}}} \nonumber \\
&\approx \frac{1 - \rho^{2}}{ 1 + \rho^{2} - 2 \rho \cos (\frac{2 \pi i}{N_{t}})}, \,\,\, i=1,\ldots,N_{t}.
\end{align}
Using the first and second largest singular values of\footnote{Due to the symmetric property of $\mu_{i}$ (i.e., $\mu_{N_{t}}>\mu_{N_{t}-1}=\mu_{1}>\mu_{N_{t}-2}=\mu_{2}>\cdots$), $\sigma_{1} = \mu_{N_{t}}$ and $\sigma_{2}=\mu_{N_{t}-1}=\mu_{1}$.} (\ref{eq:eigenvalue_approx}), we have
\begin{align}
\label{eq:singular_ratio}
\frac{\sigma_{2}}{\sigma_{1}}  \approx \frac{1 + \rho^{2} - 2 \rho}{1 + \rho^{2} - 2 \rho \cos(2 \pi \frac{N_{t}-1}{N_{t}})}.
\end{align}
Using this together with $\delta= \frac{\sigma_{2}^{2}}{\sigma_{1}^{2}} 2^{-\frac{B-B_{p}}{N_{g}-1}}$ in Proposition 3.1, we have
\begin{eqnarray}
\label{eq:proposition_approximation}
D &\lesssim  N_{t} \frac{(1 + \rho^{2} - 2 \rho)^2}{\left(1 + \rho^{2} - 2 \rho \cos(2 \pi \frac{N_{t}-1}{N_{t}})\right)^2} 2^{-\frac{B-B_{p}}{N_{g}-1}} \nonumber \\ \hspace{0.1cm} &+ \xi \sqrt{ 2N_{t}} \frac{1 + \rho^{2} - 2 \rho}{1 + \rho^{2} - 2 \rho \cos(2 \pi \frac{N_{t}-1}{N_{t}})} 2^{-\frac{B-B_{p}}{2(N_{g}-1)}}.
\end{eqnarray}
%
In (\ref{eq:proposition_approximation}), we observe that the quantization distortion $D$ decreases with the correlation coefficient $\rho$.
 Fig. \ref{fig:AGB_DistortionAnalysis} plots the normalized quantization distortion $\frac{D}{E[\|\mathbf{h}\|^{2}]}$ as a function of the correlation coefficient $\rho$. Note that $\delta = \frac{\sigma_{2}^{2}}{\sigma_{1}^{2}} 2^{-\frac{B}{r-1}}$ is obtained from the assumption that all non-zero singular values except the dominant one (i.e., $\sigma_{1}$) are the same ($\sigma_{2} = \sigma_{3} = \cdots$). Note also that $\mathcal{R}_{i}$ is tight in a regime where transmit antennas are highly correlated since $\frac{\sigma_{2}}{\sigma_{1}}$ (i.e., $\delta$) decreases with the correlation coefficient. Readers are referred to \cite{Mukkavilli,Clerckx} for more details.  We observe that if $|\rho|>0.3$, the quantization distortion $D$ of the AGB algorithm is better (smaller) than that of conventional vector quantization. We can also observe that the analysis matches well with the simulation results when the transmit antennas are highly correlated ($|\rho|>0.6$). However, when the magnitude of $\rho$ is small, the assumption in A-iii) is violated so that the proposed bound is invalid.
 \begin{figure}
\centering
\centerline{\epsfig{figure=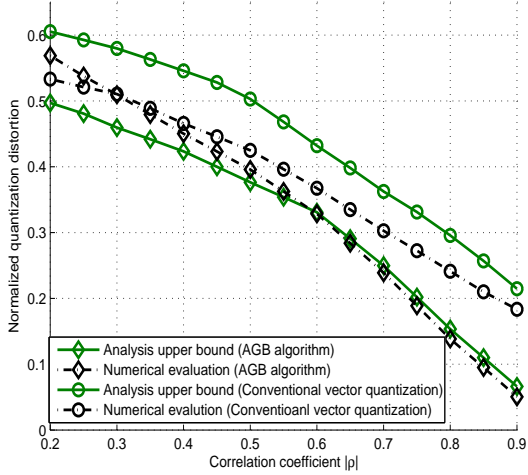, height = 70mm, width=80mm}}
\caption{Normalized quantization distortion as a function of the correlation coefficient ($N_{t}=16, N_{g}=8, B=16, B_{p}=8)$.} \label{fig:AGB_DistortionAnalysis}
\vspace{-0.2cm}
\end{figure}

The following proposition provides the upper bound of the sum rate gap $\Delta R(P)$.
\begin{proposition}
When an equal power allocation per user is applied,
the sum rate gap (per user) between the ZFBF with perfect CSI and the proposed method satisfies
\begin{align}
\label{eq:corollary}
\Delta R(P) \lesssim  \log_{2}\left(1+P \frac{K-1}{K} ( N_{t} \delta  + \xi \sqrt{2 N_{t} \delta}  )\right)
\end{align}
where $\Delta R(P)$ is the difference between the achievable rate achieved by (\ref{eq:sumrate_alluser}) and $\mathbf{w}_{k} = \frac{\mathbf{W}_{\textrm{zf}}^{k}}{\| \mathbf{W}_{\textrm{zf}}^{k}\|}$ where $\mathbf{W}_{\textrm{zf}} = \mathbf{H}^{H}\left(\mathbf{H}\mathbf{H}^{H}\right)^{-1}$.
\end{proposition}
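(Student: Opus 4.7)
The plan is to follow the now-standard Jindal-style sum-rate-gap argument, but to plug in the quantization-distortion bound already established in Proposition~\ref{lemma1} in place of the RVQ distortion estimate. For each user $k$, I would write
\[
\Delta R_k(P) = \log_2\!\Bigl(1+\tfrac{P}{K}|\mathbf{h}_k^H\mathbf{w}_k|^2\Bigr) - \log_2\!\left(1+\frac{\tfrac{P}{K}|\mathbf{h}_k^H\hat{\mathbf{w}}_k|^2}{1+\tfrac{P}{K}\sum_{j\neq k}|\mathbf{h}_k^H\hat{\mathbf{w}}_j|^2}\right),
\]
where $\mathbf{w}_k$ is the perfect-CSI ZFBF vector (so the interference term vanishes) and $\hat{\mathbf{w}}_k$ is obtained from the AGB-quantized channel $\hat{\mathbf{H}}=[\tilde{\mathbf{h}}_{r,1},\dots,\tilde{\mathbf{h}}_{r,K}]^H$. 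Standard manipulation (upper bounding the signal term by the perfect-CSI signal term, which holds in expectation) collapses $\Delta R_k$ to an ``interference-only'' form $\Delta R_k \le \log_2\!\bigl(1+\tfrac{P}{K}\sum_{j\neq k}|\mathbf{h}_k^H\hat{\mathbf{w}}_j|^2\bigr)$.

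Next I would take expectations and apply Jensen's inequality (the logarithm is concave, so $E[\log(1+X)]\le \log(1+E[X])$) to pull the expectation inside, yielding
\[
E[\Delta R_k(P)] \le \log_2\!\Bigl(1+\tfrac{P}{K}\sum_{j\neq k} E\!\left[|\mathbf{h}_k^H\hat{\mathbf{w}}_j|^2\right]\Bigr).
\]
The key geometric step is then to relate each cross term $E[|\mathbf{h}_k^H\hat{\mathbf{w}}_j|^2]$ to the quantization distortion $D$ of Proposition~\ref{lemma1}. Because the ZFBF construction on $\hat{\mathbf{H}}$ forces $\hat{\mathbf{w}}_j\perp\tilde{\mathbf{h}}_{r,k}$ for $j\neq k$, decomposing $\bar{\mathbf{h}}_k$ into its projection onto $\tilde{\mathbf{h}}_{r,k}$ and the orthogonal complement gives $|\bar{\mathbf{h}}_k^H\hat{\mathbf{w}}_j|^2 \le 1-|\bar{\mathbf{h}}_k^H\tilde{\mathbf{h}}_{r,k}|^2$. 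Multiplying by $\|\mathbf{h}_k\|^2$ and taking expectations yields
\[
E\!\left[|\mathbf{h}_k^H\hat{\mathbf{w}}_j|^2\right] \le E\!\left[\|\mathbf{h}_k\|^2\bigl(1-|\bar{\mathbf{h}}_k^H\tilde{\mathbf{h}}_{r,k}|^2\bigr)\right] = D.
\]

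Summing the $K-1$ interference terms and inserting the distortion upper bound $D\lesssim N_t\delta+\xi\sqrt{2N_t\delta}$ from Proposition~\ref{lemma1} gives the claimed bound
\[
\Delta R(P) \lesssim \log_2\!\Bigl(1+P\,\tfrac{K-1}{K}\bigl(N_t\delta+\xi\sqrt{2N_t\delta}\bigr)\Bigr).
\]
The main obstacle, in my view, is the orthogonality argument that reduces the cross term $|\mathbf{h}_k^H\hat{\mathbf{w}}_j|^2$ to the quantization distortion: one has to be careful that it is the \emph{expanded} quantized vector $\tilde{\mathbf{h}}_{r,k}$ (and not $\hat{\mathbf{h}}_{r,k}$) that appears in $\hat{\mathbf{H}}$, so that the ZFBF orthogonality $\hat{\mathbf{w}}_j\perp\tilde{\mathbf{h}}_{r,k}$ lines up with the definition of $D$ in~(\ref{eq:distortion}). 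The signal-term bound and the Jensen step are routine; the slackness introduced by Jensen is responsible for the ``$\le$'' rather than equality, and the slackness carried over from Proposition~\ref{lemma1} explains the ``$\lesssim$'' symbol in the final statement.
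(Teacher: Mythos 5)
Your proposal is correct and follows essentially the same route as the paper: the Jindal-style reduction of the rate gap to the expected residual interference via Jensen's inequality, the ZFBF orthogonality $\hat{\mathbf{w}}_{j}\perp\tilde{\mathbf{h}}_{r,k}$ giving $|\mathbf{h}_{k}^{H}\hat{\mathbf{w}}_{j}|^{2}\leq\|\mathbf{h}_{k}\|^{2}\left(1-|\bar{\mathbf{h}}_{k}^{H}\tilde{\mathbf{h}}_{r,k}|^{2}\right)$ so that each of the $K-1$ cross terms is bounded by the distortion $D$, and finally the substitution of the bound $D\lesssim N_{t}\delta+\xi\sqrt{2N_{t}\delta}$ from Proposition~1. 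Your remark that the orthogonality must be stated with respect to the expanded vector $\tilde{\mathbf{h}}_{r,k}$ (not $\hat{\mathbf{h}}_{r,k}$) is exactly the point on which the paper's proof turns, so there is nothing to add.
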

\begin{proof}
Note that $\Delta R(P)$ is given by  $\Delta R(P)= E [ \log_{2} (1 + \frac{P}{K} |\mathbf{h}_{k}^{H} \mathbf{w}_{k}|^{2} ) ] - E [\log_{2} ( 1 + \frac{ \frac{P}{K} |\mathbf{h}_{k}^{H} \hat{\mathbf{w}}_{k}|^{2}  }{ 1 + \frac{P}{K} \sum_{j=1, j \neq k}^{K} |\mathbf{h}_{k}^{H} \hat{\mathbf{w}}_{j}|^{2} } )]$. Using Jensen's inequality, $\Delta R(P)$ can be upper bounded as \cite{Jindal}
\begin{align}
\label{eq:rategap}
 \Delta R(P) &\leq E \left[ \log_{2} \left( 1 + \frac{P}{K} \sum_{j=1, j\neq k}^{K} | \mathbf{h}_{k}^{H} \hat{\mathbf{w}}_{j}|^{2} \right) \right] \nonumber \\
 &\leq  \log_{2} \left( 1 + \frac{P}{K} E \left[\sum_{j=1, j\neq k}^{K} | \mathbf{h}_{k}^{H} \hat{\mathbf{w}}_{j}|^{2} \right] \right).
\end{align}
Using orthogonality between $\hat{\mathbf{w}}_{j}$ and $\tilde{\mathbf{h}}_{r,k}$,
\begin{align}
\label{eq:orthogonality}
 \| \mathbf{h}_{k}\|^{2} \geq | \mathbf{h}_{k}^{H} \hat{\mathbf{w}}_{j}|^{2}  + \| \mathbf{h}_{k}\|^{2} | \bar{\mathbf{h}}_{k}^{H} \tilde{\mathbf{h}}_{r,k}|^{2},
 \end{align}
then (\ref{eq:rategap}) becomes
\begin{align}
\label{eq:rategap2}
 \Delta R(P) &\leq  \log_{2} \left( 1 + \frac{P}{K} E \left[\sum_{j=1, j\neq k}^{K} \| \mathbf{h}_{k}\|^{2} (1 - |\bar{\mathbf{h}}_{k}^{H} \tilde{\mathbf{h}}_{r,k}|^{2}) \right] \right) \nonumber \\
 &=  \log_{2} \left( 1 + P\frac{(K-1)}{K}  D \right)
\end{align}
 where (\ref{eq:rategap2}) is due to $D = E[\| \mathbf{h}_{k}\|^{2} (1 - |\bar{\mathbf{h}}_{k}^{H} \tilde{\mathbf{h}}_{r,k}|^{2})]$.
Using (\ref{eq:proposition}) and (\ref{eq:rategap2}), we get the desired result.
\end{proof}
Next proposition specifies the number of feedback bits needed to maintain a constant rate gap from the system with perfect CSI.
\begin{proposition}
In order to maintain a rate gap (between the ZFBF with perfect CSI and the proposed method) within $\log_{2} \beta$ bps/Hz per user, it is sufficient to scale the number of bits per user according to
\begin{align}
\label{eq:corollary2}
B &\approx B_{p} + (N_{g}-1) [\log_{2} \left( \frac{(1 + \rho^{2} - 2 \rho)^2}{(1 + \rho^{2} - 2 \rho \cos(2 \pi \frac{N_{t}-1}{N_{t}}))^2} \right) \nonumber \\ \hspace{0.1cm} &- 2\log_{2}\left(\frac{-\xi+\sqrt{\xi^{2}+4(\beta-1)\frac{K}{P(K-1)}}}{2\sqrt{N_{t}}}\right) ].
\end{align}
\end{proposition}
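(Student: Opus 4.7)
The plan is to invert the sum-rate-gap bound from Proposition 3.2 to express the required number of feedback bits. I would start by requiring the upper bound of Proposition 3.2 to be at most $\log_{2}\beta$, i.e., set
\begin{equation}
\log_{2}\!\left(1+P\tfrac{K-1}{K}\bigl(N_{t}\delta+\xi\sqrt{2N_{t}\delta}\bigr)\right)=\log_{2}\beta. \nonumber
\end{equation}
Exponentiating and letting $\gamma \triangleq (\beta-1)\tfrac{K}{P(K-1)}$, this reduces to the constraint $N_{t}\delta+\xi\sqrt{2N_{t}\delta}=\gamma$ on the quantization-distortion upper bound $\delta$ appearing in Proposition 3.1.

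The next step is to view this as a quadratic in $\sqrt{\delta}$. Writing $z=\sqrt{\delta}$ gives $N_{t}z^{2}+\xi\sqrt{2N_{t}}\,z-\gamma=0$, so by the quadratic formula (taking the positive root, since $\delta\geq 0$) one obtains a closed form for $\sqrt{\delta}$ of the type
\begin{equation}
\sqrt{\delta}\;=\;\frac{-\xi+\sqrt{\xi^{2}+4\gamma}}{2\sqrt{N_{t}}}, \nonumber
\end{equation}
up to the constant in front of $\gamma$; this is exactly the argument of the $\log_{2}$ in the second bracketed term of the claimed formula.

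From here I would invert the explicit QUB form of $\delta$ used in Proposition 3.1, namely $\delta=\frac{\sigma_{2}^{2}}{\sigma_{1}^{2}}2^{-(B-B_{p})/(N_{g}-1)}$, and take logarithms. Rearranging for $B$ gives
\begin{equation}
B \;=\; B_{p}+2(N_{g}-1)\!\left[\log_{2}\!\tfrac{\sigma_{2}}{\sigma_{1}}-\log_{2}\!\sqrt{\delta}\right]. \nonumber
\end{equation}
The last step is to eliminate the eigenvalue ratio using the exponential-correlation approximation from equation~(\ref{eq:singular_ratio}), which gives $\tfrac{\sigma_{2}}{\sigma_{1}}\approx\tfrac{1+\rho^{2}-2\rho}{1+\rho^{2}-2\rho\cos(2\pi(N_{t}-1)/N_{t})}$. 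Squaring inside the logarithm turns $2\log_{2}(\sigma_{2}/\sigma_{1})$ into $\log_{2}$ of the squared ratio appearing in the statement, and plugging in the quadratic-formula expression for $\sqrt{\delta}$ yields the announced bit budget.

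The main obstacle is essentially algebraic rather than conceptual: carefully tracking the $\sqrt{2N_{t}}$ factors produced by the mixed linear/square-root dependence of the distortion bound on $\delta$, and making sure the quadratic inversion is combined cleanly with the logarithmic inversion of the QUB without losing the tightness implicit in Proposition 3.1. The use of Jensen's inequality and the orthogonality relation were already discharged in Proposition 3.2, so once those are invoked the remaining work is a deterministic computation; the approximation symbol $\approx$ in the statement absorbs the QUB tightness assumption and the large-$N_{t}$ eigenvalue approximation from (\ref{eq:eigenvalue_approx}).
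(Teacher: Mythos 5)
Your proposal is correct and takes essentially the same route as the paper, whose proof consists of setting the Proposition 3.2 bound equal to $\log_{2}\beta$ and then ``inverting and solving for $B$''; your quadratic in $\sqrt{\delta}$, the logarithmic inversion of $\delta=\frac{\sigma_{2}^{2}}{\sigma_{1}^{2}}2^{-(B-B_{p})/(N_{g}-1)}$, and the substitution of the singular-value ratio from (\ref{eq:singular_ratio}) simply make explicit the algebra the paper omits. Your caveat ``up to the constant in front of $\gamma$'' is also well taken: the exact positive root of $N_{t}z^{2}+\xi\sqrt{2N_{t}}\,z-\gamma=0$ is $z=\bigl(-\xi+\sqrt{\xi^{2}+2\gamma}\bigr)/\sqrt{2N_{t}}$, which matches the paper's displayed argument only after replacing $\xi$ by $\xi/\sqrt{2}$ --- a $\sqrt{2}$ bookkeeping discrepancy (negligible for the small $\xi$ used in the paper, e.g., $\xi=0.05$) that the statement's $\approx$ silently absorbs.
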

\begin{proof}
In order to maintain a rate loss of $\Delta R(P) \leq \log_{2}\beta$ bps/Hz per user, we set the rate gap upper bound given in Proposition 3.2 equal to the maximum allowable gap of $\log_{2} \beta$ as
\begin{align}
\Delta R(P) \lesssim  \log_{2}\left(1+P \frac{K-1}{K} ( N_{t} \delta  + \xi \sqrt{2 N_{t} \delta }  )\right) \triangleq \log_{2} \beta.
\end{align}
By inverting (36) and solving for $B$, we get the desired result.
\end{proof}
Fig. \ref{fig:AGB_DistortionAnalysis2} plots the sum rate as a function of $\textrm{SNR}$ when $B$ in (\ref{eq:corollary2}) is applied. We fix $\beta=2$ in order to maintain a SNR gap of 3 dB.
We observe that by using a proper scaling of feedback bits, we can limit the rate loss within 3 dB (in fact around 2 dB), as desired.

\begin{figure}
\centering
\centerline{\epsfig{figure=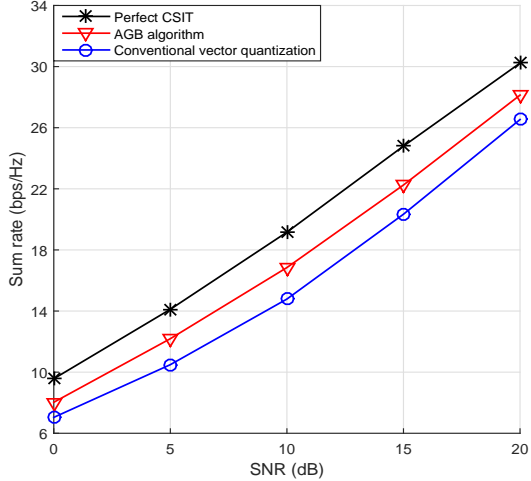, height = 70mm, width=80mm}}
\caption{Sum rate as a function of $\textrm{SNR}$ when ($N_{t}=32, N_{g}=16, K=4, \rho=0.9, \xi=0.05, B_{p}=8)$.} \label{fig:AGB_DistortionAnalysis2}
\vspace{-0.2cm}
\end{figure}

\subsection{Comments on Complexity}

In this subsection, we discuss the complexity of the AGB algorithm and conventional vector quantization.
While the major operation of the conventional approach is to search the codeword index, computations associated with pattern index selection, grouping and expansion process, pattern generation are additionally required for the proposed method.
We first analyze the computational complexity of the pattern index selection, grouping process, and the expansion process, which are performed on the fly.
Denoting the complexity associated with pattern index selection, grouping process, and the expansion process as $C_{\textrm{p}}$, $C_{\textrm{g}}$, $C_{\textrm{e}}$, respectively, then the number of required floating-point operations (flops) for each step is as follows \cite{Golub}:
\begin{itemize}
\item $C_{\textrm{p}}$ requires $4N_{t}$ flops for computing the distortion $D$ in (\ref{eq:patter_index}).
\item $C_{\textrm{g}}$ requires $(2N_{t}-1)N_{g}$ flops for the matrix multiplication in (\ref{eq:grouping_mapping}).
\item $C_{\textrm{e}}$ requires $(2N_{g}-1)N_{t}$ flops for the matrix multiplication in (\ref{eq:demapping}).
\end{itemize}
Note that these operations need to be computed for $N_{P}$ times.
We next measure the computational complexity of the pattern generation process. The number of required flops for computing the quasi-correlation matrix norm $\|\tilde{\mathbf{R}}_{t}^{(i)} \|_{F}$ and the correlation matrix distance $d_{\textrm{corr}}$ can be obtained as $2N_{t}N_{g}$ and $2N_{g}^{2}N_{t}+4N_{t}N_{g}$, respectively.
Then, according to Table I, the total computational complexity becomes $2N_{\textrm{max}}N_{t}N_{g} + {J \choose N_{P}}{N_{P} \choose 2}(2N_{g}^{2}N_{t}+4N_{t}N_{g})$. Note that $N_{\textrm{max}}$ can be reduced significantly by applying the partitioning approach discussed in Section III.B.
As mentioned, the pattern generation process does not affect the real-time operation since this process is performed off the shelf.
In contrast to the operations we just described, the codeword search complexity is quantified by $O(\cdot)$ notation. Note that codeword search complexity grows exponentially with the dimension of the vector to be quantized and the codeword search complexity for the conventional approach and proposed method is given by $O\left(N_{t}2^{B}\right)$ and $N_{P}O\left(N_{g} 2^{B-B_{p}}\right)$, respectively. Note also that the complexity of additional operations (i.e., $N_{P}(C_{\textrm{p}} + C_{\textrm{g}} + C_{\textrm{e}})$) is much smaller than that of the codeword search complexity $N_{P}O\left(N_{g} 2^{B-B_{p}}\right)$. Overall, the proposed method brings additional benefits in search complexity over the conventional approach due to the fact that dimension of the codeword being searched is reduced.

\section{Simulation Results and Discussions}
\label{sec:AGB_secIV}

\subsection{Simulation Setup}
\label{subsec:AGB_secIV_subI}

In this section, we compare the sum rate performance of the conventional vector quantization using the channel statistic-based codebook \cite{Love} and the proposed AGB algorithm. While all the feedback resources ($B$-bit) are used to quantize the channel vector $\mathbf{h}_{k}$ in the conventional vector quantization approach, $B$-bit feedback resource is divided into $B_{q}$ (channel vector quantization) and $B_{p}$ (pattern selection) in the proposed method. To express the feedback allocation, we use the notation $B=(B_{q},B_{p})$ in the sequel.
As a pattern set, we use the combination of patterns for each sub-array.
Let $B_{p,\textrm{sub}}$ be the number of pattern bits of each sub-array ($B_{p,\textrm{sub}}=\frac{B_{p}}{M}$), then $2^{B_{p,\textrm{sub}}}$ patterns are generated by applying the proposed subspace packing approach.
As a transmit antenna model, we consider the exponential correlation model in (\ref{eq:exponential_matrixfirst}) \cite{Adhikary} and two-dimensional uniform planar array (UPA) model \cite{Noh}. We use Jakes' model \cite{Proakis} for the temporal correlation coefficient $\eta = J_{0} (2 \pi f_{D} \tau)$ where $J_{0}(\cdot)$ is the $0$-th order Bessel function of the first kind, $f_{D} = v f_{c}/c$ denotes the maximum Doppler frequency, and $\tau=5ms$ is the channel instantiation interval.
With the user speed $v=3km/h$, the carrier frequency $f_{c}=2.5GHz$, and the speed of light $c=3 \times 10^8 m/s$, the temporal correlation coefficient becomes $\eta=0.9881$.
Assuming a $5$ms coherence time and frame structure of 3GPP LTE FDD systems \cite{Standard}, each fading block consists of $L\approx10$ static channel uses.

\begin{figure}
\centering
\centerline{\epsfig{figure=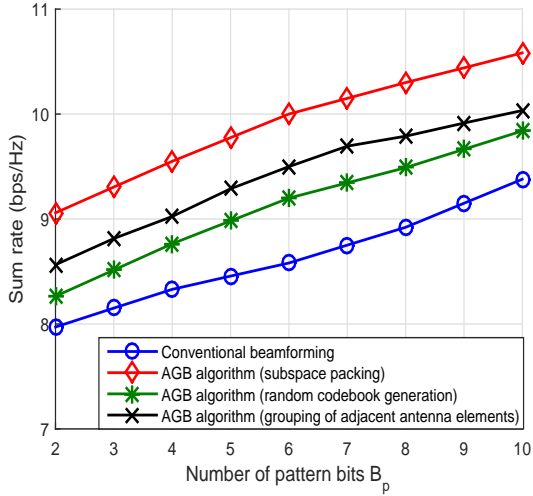, height = 70mm, width=80mm}}
\caption{Sum rate as a function of the number of pattern bits ($N_{t}=16, M=2, K=1, N_{g}=8, B_{q}=16, B=B_{q}+B_{p}), SNR$=10$ dB$.} \label{fig:AGB_Fig11}
\vspace{-0.2cm}
\end{figure}

\begin{figure}
\centering
\centerline{\epsfig{figure=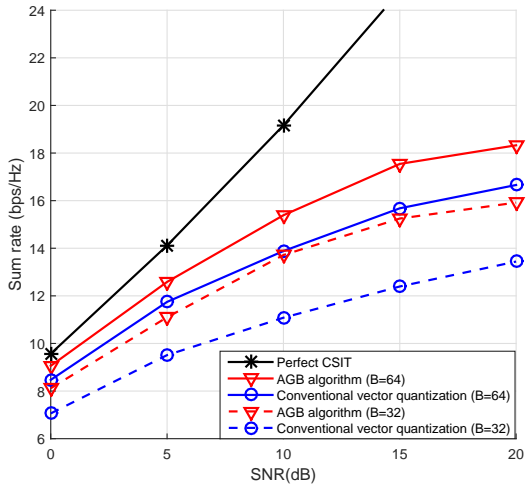, height = 70mm, width=80mm}}
\caption{Sum rate as a function of SNR when $B=(B_{q},8)$ ($N_{t}=32, M=4, K=4, N_{g}=16,\alpha=0.8$).} \label{fig:AGB_Fig9a}
\vspace{-0.2cm}
\end{figure}

\subsection{Simulation Results}
\label{subsec:AGB_secIV_subII}

We first consider the exponential channel model
\begin{equation}
\label{eq:exponential_correlation}
r_{ij} = \left\{
\begin{array}{ll}
\rho_{k}^{|j-i|} & \quad i \leq j \\
(\rho_{k}^{|j-i|})^H & \quad i > j \end{array} \right.
\end{equation}
where $r_{ij}$ is the $(i,j)$-th element of $\mathbf{R}_{t,k}$ and $\rho_{k} = \alpha e^{j \theta_{k}}$ is a transmit correlation coefficient for the $k$-th user where $\alpha$ is the magnitude of correlation coefficient and $\theta_{k}$ is the phase of the $k$-th user.
Note that the phase of each user is randomly generated from $-\pi$ to $\pi$ and independent among each user.
Note also that all users have the same transmit correlation coefficient $|\rho_{k}| = \alpha$ since $\alpha$ is determined by the antenna spacing at the basestation.

In order to observe the effectiveness of the subspace packing approach discussed in Section III.B, we compare the proposed approach to the random pattern generation and grouping of adjacent antenna elements.
In our simulations, we set $B_{q}=16, N_{t}=16, N_{g}=8, M=2, K=1$ and measure the sum rate as a function of the number of pattern bits $B_{p}$.
To set the same level of feedback, we set $B = B_{q} + B_{p}$ bit for the conventional vector quantization.
Overall, we observe from Fig. \ref{fig:AGB_Fig11} that the subspace packing approach provides a considerable sum rate gain over the approach using randomly generated patterns, AGB with grouping of adjacent antenna elements as well as the conventional vector quantization technique.
For example, to achieve $9$ bps/hz, AGB with subspace packing requires $B=18$ bits while AGB with grouping of adjacent antenna elements, AGB with random patterns and conventional vector quantization require $20$, $21$ and $24$ bits, respectively.

\begin{figure}
\centering
\centerline{\epsfig{figure=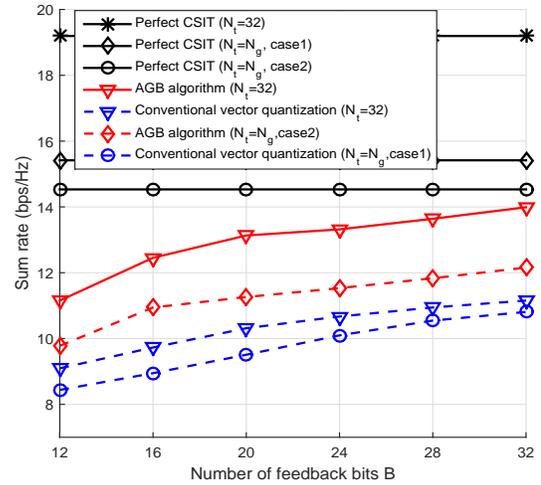, height = 70mm, width=80mm}}
\caption{Sum rate as a function of the number of feedback bits $B$ when SNR$=10$ dB and $B=(B_{q},8)$ ($N_{t}=32, M=4, K=4, N_{g}=16,\alpha=0.8$)} \label{fig:AGB_Fig9b}
\vspace{-0.2cm}
\end{figure}

We next measure the sum rate as a function of SNR. In this case, we set $N_{t}=32,M=4,K=4,\alpha=0.8$ and investigate the performance for two scenarios ($B=N_{t}$ and $2N_{t}$).
In addition, we plot the system with perfect CSIT as an upper bound.
As shown in Fig. \ref{fig:AGB_Fig9a}, the AGB algorithm achieves significant gain over the conventional vector quantization technique, bringing in more than 3 dB gain at mid SNR regime. 
In particular, with $B=2N_{t}$, the AGB algorithm performs within about 2 dB of perfect CSIT system until $\textrm{SNR}=7$ dB while others suffer from more than 5 dB loss compared to the perfect CSIT system.

\begin{figure}
 \centering
  \subfigure[]
  {\epsfig{figure=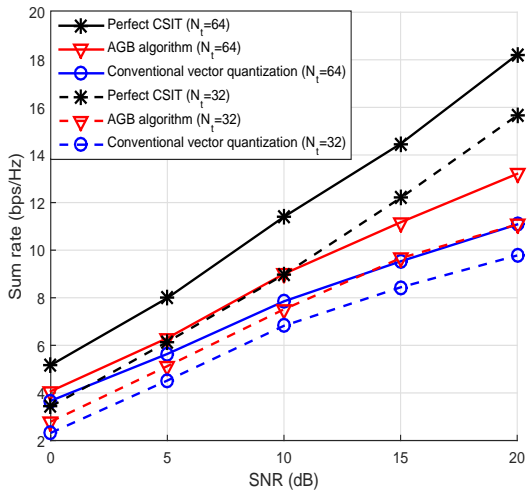, height = 70mm, width=80mm}}
  \vspace{-0.2cm}
   \subfigure[]
  {\epsfig{figure=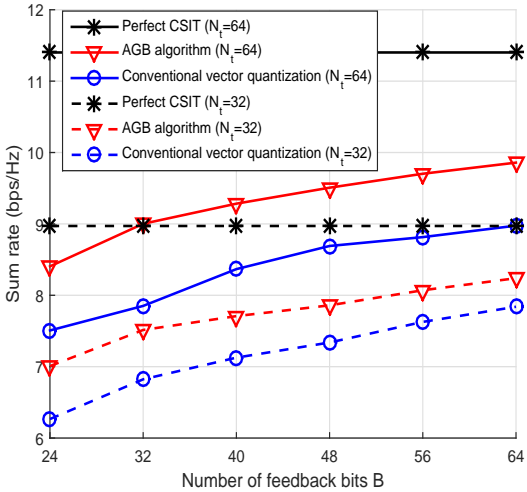, height = 70mm,  width=80mm}}
  \caption {Sum rate as a function of (a) SNR when $B=(24,8)$ for $N_{t}=32$ and $B=(48,16)$ for $N_{t}=64$ and (b) number of feedback bits $B$ for UPA correlation model when SNR$=10$ dB, $B=(B_{q},8)$ for $N_{t}=32$, and $B=(B_{q},16)$ for $N_{t}=64$.}
    \label{fig:AGB_Fig10}
    \vspace{-0.2cm}
\end{figure}

In Fig. \ref{fig:AGB_Fig9b}, we plot the sum rate as a function of the number of feedback bits. In this case, we set $K=4,\alpha=0.8,N_{g}=16$ and compare the performance of the AGB algorithm when $N_{t}=32$ with the following two scenarios; 1) a system having a reduced number of transmit antennas ($N_{t}=N_{g})$ and 2) a system where one antenna per group is selected and all remaining antennas per group is shut down ($N_{t}=N_{g}$). Interestingly, by taking advantage of high correlation among the antennas in a group, we observe that the performance of case 2) is better than that of the conventional system and case 1).
Nevertheless, due to the number of active antennas, the proposed algorithm when $N_{t}=32$ still achieves significant feedback overhead reduction over the case 2).
We observe that the proposed AGB algorithm with $N_{t}=32$ requires smaller number of bits to achieve the same level of performance. For example, the proposed approach achieves significant gain over the conventional vector quantization techniques, resulting in more than $60\%$ feedback overhead reduction.

\begin{table*}
\begin{center}
\caption{Simulation parameters for UPA model.}  
\begin{tabular}{l||l} \hline
 Variables & Simulation parameters \\ \hline
  Antenna elements spacing & $D=0.5$ \\
  Propagation path loss & $\gamma = \left(1+(\frac{s}{r})^{\alpha_{pl}}\right)^{-1}$ \\
  Path loss exponent & $\alpha_{pl} = 3$ \\
  Angular spread (vertical) & $\Delta_{V} = \frac{1}{2} \left( \arctan(\frac{s+r}{u}) - \arctan(\frac{s-r}{u})\right)$  \\
   Angle of arrival (vertical) & $\phi_{V} = \frac{1}{2}\left( \arctan(\frac{s+r}{u}) + \arctan(\frac{s-r}{u})\right)$  \\
   Angular spread (horizontal) & $\Delta_{H} = \arctan(\frac{r}{s})$ \\
   Angle of arrival (horizontal) & $\phi_{H,k} \in (-\pi,\pi]$ \\
   Elevation of the transmit antenna & $u=60m$ \\
   Radius of the scattering ring for the receiver & $r=30m$ \\
   Distance from the transmitter & $s=50m$ \\ \hline
  \end{tabular}
\end{center}
\label{tb:AGB_parameter}
\vspace{-0.2cm}
\end{table*}

In Fig. \ref{fig:AGB_Fig10}, we consider the two-dimensional UPA ($N_{\textrm{V}} \times N_{\textrm{H}}$ array) model, which is more realistic antenna model for massive MIMO scenarios.
The UPA model can be obtained by the Kronecker product of the vertical correlation matrix $\mathbf{R}_{V} \in \mathbb{C}^{N_{\textrm{V}} \times N_{\textrm{V}}}$ and the horizontal correlation matrix $\mathbf{R}_{H,k} \in \mathbb{C}^{N_{\textrm{H}} \times N_{\textrm{H}}}$.
 The resulting transmit correlation matrix of the UPA model is expressed as $\mathbf{R}_{t,k} = \mathbf{R}_{V} \otimes \mathbf{R}_{H,k}$ where $\otimes$ is the Kronecker product operator and each of the spatial correlation matrices is defined by
\begin{equation}
\label{eq:ULA_model}
[\mathbf{R}_{q,k}]_{m,p} = \frac{\gamma}{2\Delta_{q}} \int_{-\Delta_{q}+\phi_{q,k}}^{\Delta_{q}+\phi_{q,k}} e^{-j 2 \pi D (m-p) \sin(\alpha)} d\alpha
\end{equation}
where $q \in \{H,V\}$, $\gamma$ denotes propagation path loss between the transmitter and the receiver, $\Delta_{q}$ is the angular spread, $D$ is the antenna elements spacing, and $\phi_{q,k}$ is the angle of arrival (AoA) for the $k$-th user.
 We summarize the simulation parameters for UPA model in Table II.
In Fig. \ref{fig:AGB_Fig10}, we plot the sum rate as a function of SNR and the number of feedback bits for $N_{t} = 32,64$ and $K=2$.
For the UPA model, we set $N_{\textrm{V}}=4, N_{\textrm{H}}=8$ for $N_{t}=32,M=4$ and $N_{\textrm{V}}=8, N_{\textrm{H}}=8$ for $N_{t}=64,M=8$, respectively.
%
We observe from Fig. \ref{fig:AGB_Fig10}(a) that the proposed approach achieves better sum rate than the conventional scheme produces in particular for high SNR regime.
We also observe from Fig. \ref{fig:AGB_Fig10}(b) that the AGB algorithm outperforms the conventional vector quantization technique with a large margin, resulting in more than $50\%$ feedback overhead reduction.

So far, we have assumed that the receiver has knowledge of full CSI. In Fig. \ref{fig:AGB_MR_Fig1}, we investigate the performance of the AGB algorithm when the estimated CSI is employed. Since the mismatch between the actual CSI and the estimated CSI is unavoidable in a real communication, and this might result in degradation performance, it is of importance to investigate the effect of channel estimation error. In our simulation, we use an additive channel estimation model where $\mathbf{h}_{k,\textrm{est}} = \mathbf{h}_{k} + \mathbf{h}_{k,\textrm{err}}$ where $\mathbf{h}_{k,\textrm{est}},\mathbf{h}_{k}$ and $\mathbf{h}_{k,\textrm{err}}$ represent the estimated channel vector, the original channel vector and the estimated error vector, respectively. We assume that $\mathbf{h}_{k,\textrm{err}}$ is uncorrelated with $\mathbf{h}_{k,\textrm{est}}$, and $\mathbf{h}_{k,\textrm{err}}$ has i.i.d elements with zero mean and the estimation error variance $\sigma_{e,h}^{2}$. We observe from Fig. \ref{fig:AGB_MR_Fig1} that the AGB algorithm is more robust to the estimation errors than the conventional vector quantization. For example, the sum rate gain at $7$ bps/Hz of the AGB algorithm is about $5$ dB over the conventional vector quantization when $\sigma_{e,h}^{2}=0.05$, while the gain is around $3$ dB for $\sigma_{e,h}^{2}=0.01$.

\begin{figure}
\centerline{\psfig{figure=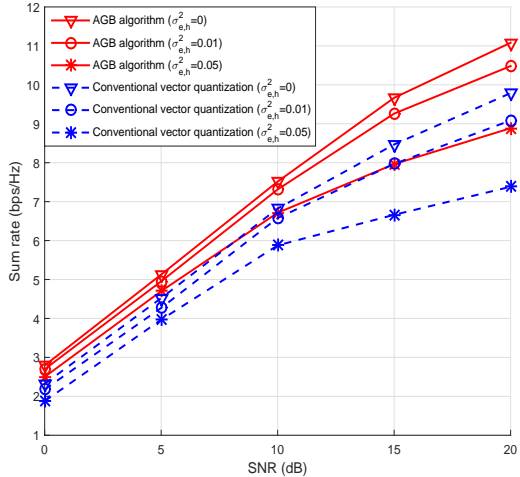, height = 70mm, width=80mm}} 
\caption{Sum rate performance with various $\sigma_{e,h}^{2}$ for UPA correlation model when $B=(24,8)$ $(N_{t}=32, M=4, K=4, N_{g}=16$).} \label{fig:AGB_MR_Fig1}
\vspace{-0.2cm}
\end{figure}

\begin{figure}
\centering
\centerline{\epsfig{figure=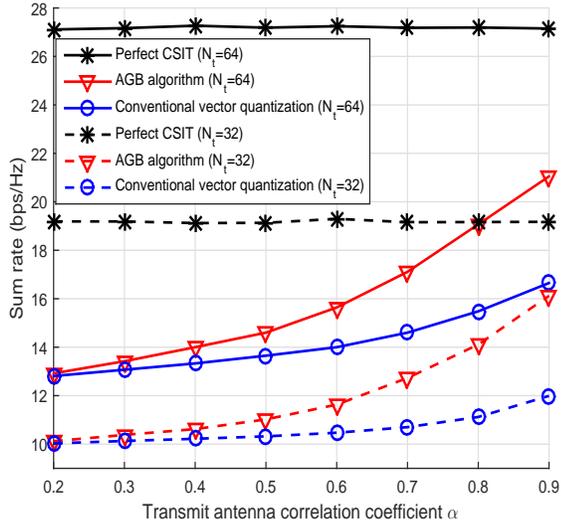, height = 75mm, width=85mm}}
\caption{Sum rate as a function of the transmit antenna correlation coefficient $\alpha$ when SNR$=10$ dB.}
\label{fig:AGB_Fig7}
\end{figure}

Finally, in Fig. \ref{fig:AGB_Fig7}, we plot the sum rate as a function of $\alpha$ for system with $N_{t} = 32,64$ and $K=4$.
In the AGB algorithm, we assign one bit per antenna elements on average. Specifically, for $N_{t}=32$, we set $B=(24,8)$ and for $N_{t}=64$, we set $B=(48,16)$, respectively. In this case, $N_{g}=16$ for $N_{t}=32$ and $N_{g}=32$ for $N_{t}=64$, respectively.
 It is worth mentioning that correlated fading tends to decrease the size of space that channel vectors span and hence is beneficial to reduce the quantization distortion by employing a user dependent channel statistic-based codebook \cite{Clerckx2}. As a result, the sum rate of multiuser MIMO systems increases with the transmit correlation coefficient.
 As shown in Fig. \ref{fig:AGB_Fig7}, when the transmit correlation coefficient $\alpha$ increases, the antenna grouping operation becomes effective and thus the sum rate of the AGB algorithm improves drastically. For example, when $\alpha = 0.8$, the sum rate gains of the AGB algorithm over the conventional vector quantization technique is $30\%$ for $N_{t}=32$ and $25\%$ for $N_{t}=64$, respectively.

\section{Conclusions}
\label{sec:AGB_secV}

In this paper, we proposed an efficient feedback reduction algorithm for FDD-based massive MIMO systems.
Our work is motivated by the observation that the CSI feedback overhead must scale linearly with the number of transmit antennas so that conventional vector quantization approach performing the quantization of the whole channel vector is not an appropriate option for the massive MIMO regime.
The key feature of the antenna group beamforming (AGB) algorithm is to control relentless growth of the CSI feedback information in the massive MIMO regime by mapping multiple correlated antenna elements into a single representative value with grouping patterns and then choosing the codeword from the codebook generated from the reduced dimension channel vector.
It has been shown by distortion analysis and simulation results that the proposed AGB algorithm is effective in achieving a substantial reduction in the feedback overhead in the realistic massive MIMO channels.

Although our study in this work focused on the single-cell scenario, we expect that the effectiveness of the proposed method can be readily extended to multi-cell scenario.
In fact, in the multi-cell scenario, more aggressive feedback compression is required since the channel information of the interfering cells as well as the desired cell may be needed at the basestation to properly control inter-cell interference.
In this scenario, the proposed AGB algorithm can be used as an effective means to achieve reduction in the feedback information.
Also, investigation of nonlinear transmitter techniques with user scheduling \cite{Byungju} would be interesting direction to be investigated.
Finally, we note that the proposed method can be nicely integrated into the dual codebooks structure in LTE-Advanced \cite{Dual,Lim} by feeding back the pattern index for long-term basis and the codebook index for short-term basis.
Since the main target of the massive MIMO system is slowly varying or static channels, dual codebook based AGB algorithm will bring further reduction in feedback overhead.

\appendices
\section{Derivation of (23)}

Denoting $(\bar{\mathbf{h}}_{\textrm{A}}^{H} \hat{\mathbf{h}}_{r})^{\ast}$ and $\bar{\mathbf{h}}_{\textrm{B}}^{H} \hat{\mathbf{h}}_{r}$ as $r_{1}\left(\cos\theta_{1}+j\sin\theta_{1}\right)$ and $r_{2}\left(\cos\theta_{2}+j\sin\theta_{2}\right)$, $\textrm{Re} [(\bar{\mathbf{h}}_{\textrm{A}}^{H} \hat{\mathbf{h}}_{r})^{\ast}(\bar{\mathbf{h}}_{\textrm{B}}^{H} \hat{\mathbf{h}}_{r})]$ becomes $r_{1}r_{2}(\cos\theta_{1}\cos\theta_{2}-\sin\theta_{1}\sin\theta_{2})$. Under the assumption that sufficient number of bits is used and hence the distortion between $\bar{\mathbf{h}}_{\textrm{A}}$ and $\hat{\mathbf{h}}_{r}$ is small (i.e, $\theta_{1} \approx 0, r_{1} \approx 1$), $E\left[\textrm{Re} \left[(\bar{\mathbf{h}}_{\textrm{A}}^{H} \hat{\mathbf{h}}_{r})^{\ast}(\bar{\mathbf{h}}_{\textrm{B}}^{H} \hat{\mathbf{h}}_{r})\right]\right]$ is expressed as
\begin{align}
&E\left[\textrm{Re} \left[(\bar{\mathbf{h}}_{\textrm{A}}^{H} \hat{\mathbf{h}}_{r})^{\ast}(\bar{\mathbf{h}}_{\textrm{B}}^{H} \hat{\mathbf{h}}_{r})\right]\right]  \nonumber \\
&= E\left[r_{1}r_{2}\left(\cos\theta_{1}\cos\theta_{2}-\sin\theta_{1}\sin\theta_{2}\right)\right] \nonumber \\
&= E\left[r_{1}r_{2}\right]E\left[\cos\theta_{1}\cos\theta_{2}-\sin\theta_{1}\sin\theta_{2}\right] \nonumber \\
&= E\left[r_{1}r_{2}\right]E\left[\cos(\theta_{1}+\theta_{2})\right] \nonumber \\
&\approx E\left[r_{2}\right]E\left[\cos\theta_{2}\right] \\
&= 0
\end{align}
where (39) follows from the fact that $\theta_{1} \approx 0, r_{1} \approx 1$, and (40) is because $E\left[\cos\theta_{2}\right]=0$ since $\theta_{2}$ is uniformly distributed between $-\pi$ and $\pi$.
\end{document}